\newif\ifshortversion
\title{Why not? Developing ABox Abduction beyond Repairs}
\author{Anselm Haak}{Knowledge Representation Group, Paderborn University, Germany}{anselm.haak@uni-paderborn.de}{https://orcid.org/0000-0003-1031-5922}{}
\author{Patrick Koopmann}{Knowledge in Artificial Intelligence, Vrije Universiteit Amsterdam, The Netherlands}{p.k.koopmann@vu.nl}{https://orcid.org/0000-0001-5999-2583}{}
\author{Yasir Mahmood}{Data Science Group, Paderborn University, Germany}{yasir.mahmood@uni-paderborn.de}{https://orcid.org/0000-0002-5651-5391}{}
\author{Anni-Yasmin Turhan}{Knowledge Representation Group, Paderborn University, Germany}{turhan@uni-paderborn.de}{https://orcid.org/0000-0001-6336-335X}{}
\authorrunning{A. Haak, P. Koopmann, Y. Mahmood, A.-Y. Turhan} 
\keywords{Description logics, Abduction, Repair semantics, Inconsistency-tolerant reasoning} 
\begin{document}

\maketitle

\begin{abstract}
  Abduction is the task of computing a sufficient extension	of a knowledge base (KB) that entails a conclusion not entailed by the original KB. 
	It serves to compute explanations, or \emph{hypotheses}, for such missing entailments.
	While this task has been intensively investigated	for perfect data and under classical semantics, less is known about abduction when erroneous data results in inconsistent KBs.
	In this paper we define a suitable notion of abduction under repair semantics, and propose a set of minimality criteria that guides abduction towards `useful' hypotheses.
	We provide initial  
	complexity results on deciding existence of and verifying abductive solutions with these criteria, under different repair semantics and for the description logics \DLLite and \ELbot.
\end{abstract}


\section{Introduction} 

In the context of description logic knowledge bases, the task of abduction is prominently used to explain missing consequences.
In general, given a theory and an \emph{observation}, that is a formula not entailed over the theory, abduction asks for a \emph{hypothesis}, which is a collection of statements to add to the theory in order to entail the observation.
For description logics, such hypotheses are often computed for a knowledge base and some kind of  Boolean query. This general task has been intensively investigated for description logics in many variants, depending on whether it is about extending the TBox~\cite{WeiKleinerDragisicLambrix2014,du2017practical,Haifani2022}, the ABox~\cite{Klarman2011,HallandBritzABox2012,Calvanese2013,Del-PintoS19,Ceylan2020,Koopmann21a,Homola2023}, both at the same time~\cite{Elsenbroich2006,Koopmann2020},
or operating on the level of concepts~\cite{bienvenu2008complexity,GlimmKW22}.
Prominent results range from complexity analysis~\cite{bienvenu2008complexity,Calvanese2013,WeiKleinerDragisicLambrix2014,Ceylan2020,Koopmann21a} to implemented systems~\cite{WeiKleinerDragisicLambrix2014,du2017practical,Koopmann2020,Haifani2022,Homola2023}, that are
sometimes integrated into user frontends~\cite{AlrabbaaBFHKKKK24,BoborovaKHP24}.

If abduction is applied to compute explanations, often  minimality criteria for the hypotheses are imposed to obtain \enquote{feasible} explanations. For example, it can be required that hypotheses are subset-minimal to facilitate small explanations \cite{Calvanese2013,WeiKleinerDragisicLambrix2014}.
Similarly, it can be of interest when generating explanations, to limit the hypotheses to a particular signature.
It has been shown that in this setting, referred to as \emph{signature-based abduction}, the complexity can be higher~\cite{Calvanese2013,Koopmann2020,Koopmann21a}.


In practical ontology-based applications, data is rarely free of errors and thus, the data that populates the ABox in the description logic KB can easily become inconsistent. In such cases,  everything would follow from the KB, but meaningful reasoning can be regained by resorting to some kind of inconsistency-tolerant, i.e.\ non-monotonic, semantics such as repair semantics \cite{BienvenuR13,LLRRS-JWS-15} or defeasible semantics \cite{CMMSV-ISWC-15,BFPS-AIJ-15,PT-JAR-18}. 
Repair semantics rely on restoring consistent versions of an inconsistent KB by removing minimal sets of conflicting ABox statements. Such a restored version is known as an \emph{ABox repair}. Depending on which, of the possibly many, repairs are considered for reasoning, different repair semantics have been defined and investigated in the literature mainly for ontology-mediated query answering (OMQA) settings, see \cite{Bienvenu-KIJ-20} for an overview.
Three fundamental repair semantics entail a Boolean query, if it holds w.r.t.\ some repair (\emph{brave semantics}), w.r.t.\ all repairs (\emph{AR semantics}) or w.r.t.\ the intersection of all repairs (\emph{IAR semantics}), respectively.

While explanations of query entailment under repair semantics have been investigated, explaining query non-entailment under these semantics has  been addressed to a much lesser extent. In particular, ABox abduction under repair semantics has not been studied thoroughly. 
In \cite{Calvanese2013} the explanation of negative query entailment  is
defined as an abductive task and investigated for \DLLite albeit under the classical semantics. The works on abduction under repair semantics  build on  their basic notions.
Abduction over inconsistent \DLLite KBs is studied in \cite{du2015towards} for IAR  semantics. They devise several minimality criteria and focus rather on computation algorithms for cases that are tractable w.r.t.\ data complexity.
In \cite{bienvenu2019computing}, the authors define explanations for positive and negative query entailment under  repair semantics. They investigate the data complexity of verifying (preferred) explanations for \DLLiteR and brave, AR and IAR semantics and show (in)tractability. We build on notions introduced in their paper and extend some of their results. 
A closely related setting is studied in  \cite{lukasiewicz2022explanations} for variants of Datalog$^{\pm}$. The authors concentrate on 
showing how  removal of facts in order to restore consistency, causes the non-entailment of the query and thus take a somewhat complementary view to  \cite{bienvenu2019computing}.

In this paper we study ABox abduction under repair semantics. We focus on flat ABox abduction, where the hypotheses use atomic concepts only and where the observation is a Boolean instance query (BIQ). 
%
%
We first need to adapt the basic definitions for abduction to the inconsistency-tolerant setting (in Section.~\ref{sec:abd}). Using repair semantics results in  some subtle differences  in comparison to abduction under classical semantics.  To address these, we make some conceptual contributions to adapt to the new setting. 
Since reasoning with the generated hypotheses is using repair semantics, we do not require the hypothesis itself to be consistent. This can lead to more ABox abduction results, obviously.
We extend the set of common minimality criteria for hypotheses to new ones that are dedicated to limit the (effect of) conflicts.

%
%
We show also some initial complexity results for two prominent decision problems  introduced for abduction \cite{Calvanese2013,Koopmann21a}. Given a KB and an observation,  the \emph{existence problem}, is to decide whether a hypothesis exists at all and the \emph{verification problem}, is to decide whether a given set of statements is a hypothesis. 
We examine  these problems for flat ABox abduction using observations that are atomic BIQs in regard of brave and AR semantics for the DLs \ELbot and \DLLite. Additionally, we cover the cases of preferred hypotheses that are subset-minimal or cardinality-minimal and also whether or not the signature is restricted.

It turns out (in Section~\ref{sec:ExPro}) that the existence problem considered without a signature restriction is trivial under brave semantics, but for AR semantics its complexity drops to that of the complement of brave entailment. 
Furthermore, deciding existence under signature restrictions keeps the same complexity of entailment for brave semantics, but for AR semantics it increases by one complexity level in the polynomial hierarchy for \ELbot.

The verification problem (treated in Section~\ref{sec:VePro}) does not become trivial for unrestricted signatures, but has the same complexity as entailment for general and $\leq$-minimal hypotheses. In case subset-minimality is required for hypotheses, we show that a more heterogeneous complexity  landscape unfolds. For instance, brave semantics incurs no or moderate increase in complexity depending on the DL.

\smallskip 
\noindent
\begin{shortonly}
All of the omitted proofs can be found in the long version of the paper that is appended to this submission.
\yasir{long version or technical report?}
\end{shortonly}

\section{Preliminaries} 
For a general introduction to description logics, we refer to the description logic textbook~\cite{DBLP:books/daglib/0041477}.
We assume familiarity with computational complexity~\cite{DBLP:books/daglib/0092426}, in particular with the complexity classes $\NL, \Ptime, \NP, \co\NP$ and $\SigmaP$. 
Additionally, 
$\DP$ is the class of decision problems representable as the intersection of a problem in $\NP$ and a problem in $\co\NP$.

\subsection{The Description Logics Considered: \ELbot and \DLLite}
The syntax of  \ELbot  concepts is given by
\[C \Coloneqq C \sqcap C \mid \exists r.C \mid A \mid \top \mid \bot,\]
where $r$ and $A$ range over all concept and role names, respectively.
\ELbot TBoxes  contain finitely many \emph{concept inclusions} 
$C \sqsubseteq D$ for \ELbot concepts $C$ and $D$.

We consider the $\DLLite$ dialects $\DLLiteR$ and \DLLiteCore.
In \DLLiteR (underlying the OWL 2 QL profile), TBoxes may contain \emph{concept inclusions} of the form $B \subsum C$ and \emph{role inclusions} of the form $Q \sqsubseteq S$, where $B, C, Q$ and $S$ are generated by the following grammar:
%
%
\[
B \Coloneqq A \mid \exists Q, \qquad~
C \Coloneqq B \mid \neg B, \qquad~
Q \Coloneqq R \mid R^-, \qquad~
S \Coloneqq Q \mid \neg Q,
\]
where $A$ and $R$ range over all concept and role names, respectively.
Then \DLLiteCore restricts \DLLiteR by disallowing role inclusions, so only concept inclusions of the above form are allowed.

We study instance queries (IQs), which consist of a (complex) concept and a variable: $C(x)$.
Boolean instance queries (BIQs) are IQs that use an individual name instead of a variable: $C(a)$.

For the rest of the paper, the general term \DLLite refers to either \DLLiteCore or \DLLiteR.
We do so, since all of our results apply to both DLs, as 
our proofs only use properties shared by both DLs:
(1) 
entailment of atomic BIQs is \NL-complete under \brave  and \coNP-complete under \ar semantics,
(2) for a TBox \calT, subset-minimal \calT-inconsistent ABoxes \calA are of size $2$, where \calA is \calT-inconsistent, if $\tup{\calT, \calA} \models \bot$, and
(3) for a TBox \calT and atomic BIQ $\alpha$, minimal \calT-supports of $\alpha$ are of size $1$, where a \calT-support of $\alpha$ is an ABox \calA with $\tup{\calT, \calA} \models \alpha$.

\subsection{Repair Semantics}

If a knowledge base is inconsistent, repair semantics can \enquote{restore} consistent versions and admit meaningful reasoning again.
As it is common, we consider ABox repairs.
We define these as well as two common kinds of repair semantics next.

Let $\calK = \tup{\calT, \calA}$ be an inconsistent knowledge base and \query be a Boolean (conjunctive) query.
A \emph{repair} of $\calK$ is a subset $\calR \subseteq \calA$ such that $\tup{\calT, \calR} \not\models \bot$ and there is no strict superset $\calR' \supset \calR$ with these properties.
The somewhat dual notion is a \emph{conflict} or \emph{conflict set} \conflict, which is a subset of the ABox that is \calT-inconsistent and subset-minimal with this property.
We denote by $\Conf(\calK)$ the set of conflicts of \calK.
We recall entailment under brave~\cite{BienvenuR13} and AR semantics~\cite{LLRRS-JWS-15}:
\begin{itemize}
	\item $\calK\models_{\brave} \query$ if and only if there exists some repair $\calR$ of $\calK$ such that $\tup{\calT,\calR} \models \query$.
	\item $\calK\models_{\ar} \query$ if and only if $\tup{\calT,\calR}\models \query$ for every repair $\calR$ of $\calK$.
\end{itemize}
The complexity of query entailment under repair semantics is well understood~\cite{BienvenuB16}.
Precisely, checking entailment of atomic BIQs under $\brave$ semantics is $\NL$-complete for $\DLLite$ and $\NP$-complete for $\EL_\bot$ in combined complexity, whereas under \ar semantics it is $\co\NP$-complete for both DLs.


\section{ABox Abduction for Inconsistent KBs} 
\label{sec:abd}

The central task of abduction is to compute abductive hypotheses. We define these for non-entailed BIQs  under repair semantics.
%
%
\begin{definition}\label{def:hypotheses}
	Let $\calK = \tup{\calT, \calA}$ be an inconsistent KB, $\alpha$ be an atomic BIQ (called an \emph{observation}) and $\calS \in \{\brave, \ar\}$ such that $\calK \not\models_\calS \alpha$.
	Then, 
	a pair
	$\tup{\calK, \alpha}$ is called an \emph{$\calS$-abduction problem}.
	A solution for such a problem, called \emph{\calS-hypothesis}, is an ABox \Hyp such that $\tup{\calT, \calA \cup \Hyp} \models_\calS \alpha$.
  An \calS-hypothesis \Hyp is called
	\begin{enumerate}
		\item \emph{flat}, if $\Hyp$ contains no complex concepts;
		\item \emph{over $\Sigma$}, if \Hyp uses only names from signature $\Sigma$, where $\Sigma$ is a set of  concept, role and individual names;
		\item \emph{conflict-confining}, if $\Conf(\tup{\calT, \calA \cup \Hyp}) = \Conf({\calK})$.
	\end{enumerate}
\end{definition}
Note that for an \calS-abduction problem $\tup{\calK, \alpha}$ we require that \calK is inconsistent and $\calK\not\models_\calS \alpha$.
So, we consider only the so-called promise problem, i.e.\ the problem restricted to these particular inputs.
The restriction aligns with the intuition that one asks for an $\calS$-hypothesis if it is already known that the knowledge base is inconsistent and the observation is not $\calS$-entailed in $\calK$.
In contrast, if we instead assume that \calK is consistent and $\alpha$ is not entailed by \calK under classical semantics, we obtain classical abduction problems.
In this case, we call an ABox \Hyp \emph{hypothesis for $\alpha$ under classical semantics}, if $\tup{\calT, \calA \cup \Hyp}\not\models \bot$ and $\tup{\calT, \calA \cup \Hyp} \models \alpha$.

While the first two properties of \calS-hypotheses from Definition~\ref{def:hypotheses} are standard for abduction, the last one adapts the idea of a hypothesis not introducing any inconsistencies to the setting, where the KB is already inconsistent to begin with.
It can equivalently be defined by requiring that $\tup{\calT, \calR \cup \Hyp} \not\models\bot $ for every repair $\calR$ of $\calK$.
Note that this property might not always be desired.
%
We consider the following reasoning problems for 
a given $\Sem$-abduction problem.
\pagebreak[2]    
\begin{definition}[Reasoning Problems]\label{def:problems}\ 
	Given an  \calS-abduction problem $\tup{\calK, \alpha}$.
	\begin{enumerate}[topsep=-\parskip+\lineskip]
		\item The \emph{existence problem} asks whether 
		$\tup{\calK, \alpha}$ has a solution;
		\item The \emph{verification problem} asks whether 
		a given ABox \calH is a hypothesis for 
		$\tup{\calK, \alpha}$.
	\end{enumerate}
\end{definition}



To obtain hypotheses that are  meaningful for explanation purposes, minimality criteria that yield \emph{preferred hypotheses} have been defined already for abduction under classical semantics. We restate some of them and extend this set of criteria to also treat conflicts.
\begin{definition}\label{def:minimality}
	Let $\calS \in \{\brave, \ar\}$, $\tup{\calK, \alpha}$ be an $\calS$-abduction problem, where $\calK = \tup{\calT, \calA}$, and let $\Hyp$ be an \calS-hypothesis for $\tup{\calK, \alpha}.$ 
	Considering ${\preceq}\in \{\subseteq,\leq\}$, \Hyp is called
	\begin{enumerate}[topsep=-\parskip+\lineskip]
		\item \emph{$\preceq$-minimal}, if there is no $\calS$-hypothesis $\Hyp'$ for $\tup{\calK, \alpha}$ such that $\Hyp' \prec \Hyp$;
		\item \emph{$\preceq_c$-minimal}, if there is no \calS-hypothesis $\Hyp'$ for $\tup{\calK, \alpha}$ such that $\Conf(\tup{\calT, \calA \cup \Hyp'}) \prec \Conf(\tup{\calT, \calA \cup \Hyp})$.
%
	\end{enumerate}
\end{definition}
We  use the term \emph{subset-minimal} for $\subseteq$-minimal  and \emph{cardinality-minimal} for  $\leq$-minimal.
For any (reasonable) combinations of repair semantics, the above properties, and minimality criteria, we consider the corresponding computational problems introduced in \cref{def:problems}.	

Under certain repair semantics, already standard reasoning tasks such as query answering can behave in unexpected ways.  This also holds true for abduction of $\subseteq$-minimal \ar-hypotheses,  due to reasoning being inherently non-monotonic in this case,
as the following  interesting effect illustrates.
More precisely, the set of \ar-hypotheses for a given \ar-abduction problem $\tup{\calK, \alpha}$ does not need to be convex with respect to the subset-relation.
We illustrate this  by a small example KB $\calK = \tup{\calT, \emptyset}$ and ABoxes $\calA_1 \subsetneq \calA_2 \subsetneq \calA_3$ such that $\tup{\calT, \calA_1} \models_\ar D(a)$ and $\tup{\calT, \calA_3} \models_\ar D(a)$, but $\tup{\calT, \calA_2} \not\models_\ar D(a)$.
This can be achieved by defining the TBox and the ABoxes as follows: 
\begin{align*}
	\calT &\coloneqq \{ B_1 \sqcap B_2 \sqsubseteq \bot,~ 
	C_1 \sqcap C_2 \sqsubseteq \bot,\quad 
	B_1 \sqcap C_1 \sqsubseteq D,~ 
	B_2 \sqcap C_1 \sqsubseteq D,\quad 
	E \sqsubseteq D\}, \\
	\calA_1 &\coloneqq \{B_1(a), B_2(a), C_1(a)\}, \qquad
	\calA_2 \coloneqq \calA_1 \cup \{C_2(a)\},  \qquad
	\calA_3 \coloneqq \calA_2 \cup \{E(a)\}
\end{align*}
%
This effect implies that $\subseteq$-minimality cannot be checked \emph{locally} by only considering subsets that remove one assertion at a time.
Instead, one seems to need a \emph{global} check for all subsets. 


In classical abduction, one further considers \emph{semantically minimal} hypotheses $\calH$, for which there exists no hypothesis $\Hyp'$ such that
$\tup{\calT, \calA \cup \Hyp} \models \Hyp'$, but $\tup{\calT, \calA \cup \Hyp'} \not\models\Hyp$.
We argue that while such a minimality criterion is natural for \ar-semantics, its meaning is unclear for $\brave$-hypotheses. 
For instance, what does semantic minimality tell  about two \brave-hypotheses entailing the observation, but  in possibly different repairs?
Further exploration of this minimality criterion is therefore left for future work.

%


\begin{table}
\centering
\begin{tabular}{llcccc}
	\toprule
	DLs & Semantics~ & \multicolumn{2}{c}{Existence} & \multicolumn{2}{c}{Verification}  \\ 
	& & general & signature & $\leq$-min & $\subseteq$-min \\
	\midrule 
	\multirow{2}{*}{$\DLLite$}  
	& \brave
	& Trivial &  \NL &  \NL & \NL\\
	& \ar
	&\NL &  in $\SigmaP$ & \coNP & \DP-hard, in $\PiP$ \\
	\multirow{2}{*}{$\ELbot$}
	& \brave 
	& Trivial &  \NP & \NP & \DP  \\
	& \ar 
	&\coNP & \SigmaP  &\coNP & \DP-hard, in $\PiP$ \\
		\bottomrule
\end{tabular}
\caption{Complexity overview for existence problem
	and for verification of hypothesis under subset-  and cardinality minimality. Unless noted otherwise all results are completeness results.}
\end{table}


\section{Existence Problem} 
\label{sec:ExPro}

We study in this section the complexity of the existence problem for both \ELbot and \DLLite, with and without a given signature, 
under brave and \ar semantics. 
Observe that for $\calS \in \{\brave, \ar\}$, the existence of any \calS-hypothesis implies the existence of a minimal one for all of the introduced minimality criteria.
Therefore, we only consider the existence problem for general $\calS$-hypotheses.
We begin with the case where no signature is given.
Further, the fact that the singleton set containing only the observation can be a hypothesis leads to the problem degenerating to a special case of entailment, or even becoming trivial.
This also lends additional motivation to study the signature-based setting next, where such trivial hypotheses can be prevented.

\subsection{Unrestricted Signature Hypothesis --- Admitting Trivial Hypotheses}
\label{sec:triv}

As we only consider atomic BIQs as observations $\alpha$, the set $\{\alpha\}$ is an ABox and, therefore, a candidate for a hypothesis for $\alpha$.
We study in this section how this trivial hypothesis affects the complexity of the existence problem for $\calS$-hypotheses, where $\calS\in\{\brave,\ar\}$.

Let $\tup{\calK, \alpha}$ be an \calS-abduction problem, where $\calK = \tup{\calT, \calA}$.
In case of $\calS = \brave$, it is easy to see that the set $\Hyp = \{\alpha\}$ is a \brave-hypothesis for $\tup{\calK, \alpha}$, as $\alpha$ is contained in some repairs of $\tup{\calT, \calA \cup \{\alpha\}}$.
Hence, a \brave-hypothesis always exists.
The case of AR semantics is slightly more interesting, as an \ar-hypothesis need not exist in general, even if trivial hypotheses are allowed.
Interestingly, in this case the complexity of the existence problem 
becomes a special case of \ar entailment that has the same complexity as non-entailment under brave semantics for both DLs.
In case of \ELbot, this means that checking existence of \ar-hypotheses has the same complexity as \ar entailment.
In contrast, for \DLLite this gives a complexity of $\co\NL = \NL$, which is the complexity of brave entailment.

\begin{longonly}
The remainder of this section is dedicated to the proof of the above finding in case of \ELbot and \DLLite, but the majority of findings holds independent of the DL under consideration.
The approach is as follows: We first show that, for a given KB \calK, there is any \ar-hypothesis for an assertion $\alpha$ if and only if $\{\alpha\}$ is an \ar-hypothesis for $\alpha$, and that this is equivalent to $\{\alpha\}$ being conflict-confining for \calK.
We then relate the problem of checking whether a singleton ABox is conflict-confining to the complement of a form of brave entailment and obtain the desired complexity result.

We begin by showing that, if the ABox $\{\alpha\}$ is conflict-confining for a given KB $\calK$, then it is an \ar-hypothesis for $\alpha$ in $\calK$.
We observe that in this case, $\{\alpha\}$ satisfies all minimality conditions from \cref{def:minimality}.
\end{longonly}

\begin{longonly}
\begin{lemma}\label{lem:conf-pres-1}
	Let $\tup{\calK, \alpha}$ be an \ar-abduction problem.
	If $\{\alpha\}$ is conflict-confining for $\calK$, then $\{\alpha\}$ is an \ar-hypothesis for $\alpha$ in $\calK$.
	In this case, $\{\alpha\}$ also is $\preceq$-minimal and $\preceq_c$-minimal for ${\preceq} \in \{\leq, \subseteq\}$.
\end{lemma}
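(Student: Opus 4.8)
The plan is to prove the two claims of the lemma in sequence: first that $\{\alpha\}$ is an \ar-hypothesis, and then that it enjoys both minimality properties. The key observation driving the first part is the characterization of conflict-confining mentioned right after \cref{def:hypotheses}: $\{\alpha\}$ being conflict-confining means $\Conf(\tup{\calT, \calA \cup \{\alpha\}}) = \Conf(\calK)$, which is equivalent to requiring that $\tup{\calT, \calR \cup \{\alpha\}} \not\models \bot$ for every repair $\calR$ of $\calK$. So first I would unpack this equivalence and fix an arbitrary repair $\calR$ of $\calK = \tup{\calT, \calA}$.

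The heart of the first part is to show that the repairs of $\calK' = \tup{\calT, \calA \cup \{\alpha\}}$ are exactly the sets $\calR \cup \{\alpha\}$, where $\calR$ ranges over the repairs of $\calK$. Since no new conflicts are introduced (conflict-confining), adding $\alpha$ to a repair $\calR$ keeps it consistent, and $\alpha$ cannot participate in any conflict; hence $\calR \cup \{\alpha\}$ is consistent, and maximality follows because any element of $\calA$ outside $\calR$ was already excluded for conflicting with $\calR$, and $\alpha$ itself is forced in. I would argue both inclusions of this correspondence carefully. Given this, every repair of $\calK'$ contains $\alpha$, so $\tup{\calT, \calR \cup \{\alpha\}} \models \alpha$ trivially for every repair, which is exactly $\calK' \models_\ar \alpha$, establishing that $\{\alpha\}$ is an \ar-hypothesis.

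For the minimality claims, I would note that $\{\alpha\}$ is $\subseteq$-minimal (and hence $\leq$-minimal) almost by definition: its only proper subset is $\emptyset$, and since $\calK = \tup{\calT, \calA} \not\models_\ar \alpha$ by the promise of the \ar-abduction problem, $\emptyset$ is not an \ar-hypothesis. Thus there is no \ar-hypothesis strictly contained in $\{\alpha\}$, giving both $\preceq$-minimality for $\preceq \in \{\subseteq, \leq\}$. For $\preceq_c$-minimality, the point is that conflict-confiningness gives $\Conf(\tup{\calT, \calA \cup \{\alpha\}}) = \Conf(\calK)$, which is the smallest possible conflict set any hypothesis can achieve: every \ar-hypothesis $\Hyp'$ satisfies $\calA \subseteq \calA \cup \Hyp'$, so $\Conf(\calK) \subseteq \Conf(\tup{\calT, \calA \cup \Hyp'})$ by monotonicity of the conflict set under ABox extension. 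Hence no $\Hyp'$ can have a strictly smaller conflict set, yielding $\subseteq_c$- and $\leq_c$-minimality.

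The main obstacle I anticipate is the repair-correspondence step in the first part. One must be precise about why adding $\alpha$ does not destroy maximality of a repair and why it does not spuriously merge or split conflicts; the conflict-confining hypothesis is exactly what rules out $\alpha$ lying in a fresh conflict, but I would want to verify that no element of $\calA$ that was previously excluded from a repair could now be re-admitted, and that every repair of $\calK'$ indeed arises this way. Once this bijection is nailed down, the remaining reasoning is routine, leaning on the promise $\calK \not\models_\ar \alpha$ and the monotonicity of $\Conf(\cdot)$ under ABox growth.
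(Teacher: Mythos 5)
Your proposal is correct and takes essentially the same route as the paper: the paper likewise shows that conflict-confiningness forces $\alpha$ into every repair of $\tup{\calT, \calA \cup \{\alpha\}}$ (by contradiction, since a repair omitting $\alpha$ would be a repair of $\calK$ that could consistently absorb $\alpha$, violating maximality), and derives the minimality claims from the promise $\calK \not\models_\ar \alpha$ together with $|\{\alpha\}| = 1$ and the fact that no new conflicts are introduced. Your full repair-correspondence is slightly more than is needed (only the containment of $\alpha$ in every repair is used), and your explicit monotonicity argument for $\Conf$ under ABox extension spells out what the paper dismisses as holding \enquote{by definition}, but these are elaborations, not a different approach.
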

\begin{proof}
	Let $\calK = \tup{\calT, \calA}$ and assume that $\{\alpha\}$ is conflict-confining for $\calK$.
	Let $\calK_\alpha \coloneqq \tup{\calT, \calA \cup \{\alpha\}}$.
	We show that $\calK_\alpha \models_{\text{AR}} \alpha$, that is, $\tup{\calT, \calR} \models \alpha$ for all repairs $\calR$ of $\calK_\alpha$.
  Consider any repair \calR of $\calK_\alpha$.
  It is sufficient to show that $\alpha \in R$, as this readily implies $\tup{\calT, \calR} \models \alpha$.
  For the sake of contradiction, assume $\alpha \not\in \calR$.
	As $\calR$ is a repair of $\calK_\alpha$, it is also a repair of $\calK$.
	As $\{\alpha\}$ is conflict-confining, we have $\tup{\calT, \calR \cup \{\alpha\}} \not\models \bot$.
	Therefore, $\calR$ is not maximally consistent, which is a contradiction.
	
  If $\{\alpha\}$ is conflict-confining, then it is $\preceq$-minimal by our assumption that $\calK\not\models_\ar\alpha$	and the fact that $\{\alpha\}$ is of size $1$.
  Furthermore, it is $\preceq_c$-minimal by definition as it does not introduce any additional conflicts.
\end{proof}
\end{longonly}

\begin{longonly}
Next, we show that $\{\alpha\}$ not being conflict-confining prevents \emph{any} \ar-hypotheses for $\alpha$.
\end{longonly}

\begin{longonly}
\begin{lemma}\label{lem:conf-pres-2}
	Let $\tup{\calK, \alpha}$ be an \ar-abduction problem.
	If $\{\alpha\}$ is not conflict-confining for $\calK$, then $\alpha$ has no \ar-hypothesis in $\calK$.
\end{lemma}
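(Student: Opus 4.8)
The plan is to prove the contrapositive in a strong form: I assume that \emph{some} \ar-hypothesis $\Hyp$ for $\alpha$ exists and derive a contradiction from the failure of conflict-confinement. The first step is to extract usable evidence from the hypothesis. Since $\{\alpha\}$ is not conflict-confining, $\Conf(\tup{\calT, \calA \cup \{\alpha\}}) \neq \Conf(\calK)$; because minimality of a conflict is an internal property, every conflict of $\calK$ survives in $\calA \cup \{\alpha\}$, so inequality means a \emph{new} conflict appears, i.e.\ there is some $\conflict \in \Conf(\tup{\calT, \calA \cup \{\alpha\}}) \setminus \Conf(\calK)$. As $\conflict$ cannot be a subset of $\calA$ (otherwise it would lie in $\Conf(\calK)$), it must contain $\alpha$; hence $C_0 \coloneqq \conflict \setminus \{\alpha\} \subseteq \calA$, and by subset-minimality of $\conflict$ the set $C_0$ is $\calT$-consistent while $\{\alpha\} \cup C_0$ is $\calT$-inconsistent.

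Next I would characterise which repairs entail $\alpha$. Let $P$ be the set of assertions $\gamma \in \calA \cup \Hyp$ with $\tup{\calT, \{\gamma\}} \models \alpha$. Using property~(3)—minimal \calT-supports of an atomic BIQ have size $1$—a repair $\calR$ of $\tup{\calT, \calA \cup \Hyp}$ satisfies $\tup{\calT, \calR} \models \alpha$ if and only if $\calR \cap P \neq \emptyset$: the ``if'' direction is monotonicity, and for ``only if'' any $\calR$ entailing $\alpha$ is a support and thus contains a size-$1$ support, i.e.\ an element of $P$. The key observation is that every $\gamma \in P$ is $\calT$-inconsistent with $C_0$: from $\tup{\calT, \{\gamma\}} \models \alpha$ and inconsistency of $\{\alpha\} \cup C_0$, monotonicity gives $\tup{\calT, \{\gamma\} \cup C_0} \models \bot$.

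I would then build the offending repair. Since $C_0$ is $\calT$-consistent and $\calA \cup \Hyp$ is finite, $C_0$ extends greedily to a maximal $\calT$-consistent subset $\calR \subseteq \calA \cup \Hyp$, i.e.\ a repair with $C_0 \subseteq \calR$. If some $\gamma \in P$ were in $\calR$, then $\{\gamma\} \cup C_0 \subseteq \calR$ would be inconsistent, contradicting consistency of $\calR$; hence $\calR \cap P = \emptyset$ and therefore $\tup{\calT, \calR} \not\models \alpha$. This repair witnesses $\tup{\calT, \calA \cup \Hyp} \not\models_\ar \alpha$, contradicting that $\Hyp$ is an \ar-hypothesis. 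As $\Hyp$ was arbitrary, $\alpha$ has no \ar-hypothesis in $\calK$.

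The main obstacle—and the place where property~(3) does the real work—is the ``only if'' direction of the repair characterisation. Without knowing that entailment of $\alpha$ is always witnessed by a single assertion, a repair containing $C_0$ could still entail $\alpha$ jointly through several individually harmless assertions, so committing to $C_0$ would not suffice to block $\alpha$. I note that the argument is entirely DL-independent, relying only on monotonicity of classical entailment and property~(3); property~(2) (conflicts of size $2$) is not needed here, which matches the claim that these findings hold uniformly for \DLLite and \ELbot.
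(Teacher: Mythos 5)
Your extraction of $C_0$ from the failure of conflict-confinement is correct (conflicts of $\calK$ persist in the enlarged ABox, so a new conflict containing $\alpha$ must exist, and $C_0 = \conflict \setminus \{\alpha\}$ is $\calT$-consistent while $C_0 \cup \{\alpha\}$ is not), and so is the greedy extension of $C_0$ to a repair of $\tup{\calT, \calA \cup \Hyp}$. The genuine gap is the step you yourself flag as load-bearing: your characterisation of $\alpha$-entailing repairs via the set $P$ rests on property~(3), and property~(3) is listed in the paper as a property shared by \DLLiteCore and \DLLiteR --- it does \emph{not} hold for \ELbot. For instance, with $\calT = \{A \sqcap B \sqsubseteq D\}$ the unique minimal $\calT$-support of $D(a)$ is $\{A(a), B(a)\}$, of size two. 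Since \cref{lem:conf-pres-2} is stated for arbitrary \ar-abduction problems and is invoked (via \cref{lem:conf-pres}) for \ELbot in \cref{thm:triv-complexity}, a proof valid only for \DLLite does not establish it. Your closing claim that the argument is \enquote{entirely DL-independent} because it relies only on monotonicity and property~(3) has it backwards: reliance on property~(3) is exactly what makes it DL-dependent, and the paper's own \ELbot constructions (axioms such as $\sqcap_{c \in \varphi} A_c \sqsubseteq A_\varphi$) exploit precisely the joint, multi-assertion entailments that your characterisation excludes.

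The fix is small and shows that the detour through $P$ is unnecessary; it is also how the paper argues. Suppose a repair $\calR$ of $\tup{\calT, \calA \cup \Hyp}$ with $C_0 \subseteq \calR$ satisfied $\tup{\calT, \calR} \models \alpha$. Being a repair, $\calR$ is $\calT$-consistent, so $\calT \cup \calR$ has a model; that model satisfies $C_0$ (as $C_0 \subseteq \calR$) and $\alpha$ (by entailment), hence it is a model of $\calT \cup C_0 \cup \{\alpha\}$ --- but this set has no models. So $\tup{\calT, \calR} \not\models \alpha$ no matter whether entailment would be produced by one assertion or by several jointly, and $\calR$ refutes AR-entailment. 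The paper's proof is this argument with one difference in where the $\alpha$-incompatible consistent set comes from: instead of extracting $C_0$ from a new conflict, it uses the equivalent formulation of conflict-confinement stated after \cref{def:hypotheses} to obtain a repair $\calR$ of $\calK$ with $\tup{\calT, \calR \cup \{\alpha\}} \models \bot$, and then extends that $\calR$ to a repair of the enlarged KB. Pure monotonicity suffices; no bound on support size is needed.
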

\begin{proof}
	Let $\calK = \tup{\calT, \calA}$.
  If $\{\alpha\}$ is not conflict-confining for $\calK$, there is a repair $\calR$ of $\calK$ such that $\tup{\calT, \calR \cup \{\alpha\}} \models \bot$.
	For the sake of contradiction, assume that there is a hypothesis $\Hyp$ of $\alpha$ in $\calK$, that is: for all repairs $\calR''$ of $\calK_{\Hyp} \coloneqq \tup{\calT, \calA \cup \Hyp}$, we have $\tup{\calT, \calR''} \models \alpha$.
	As $\calR$ is consistent with $\calT$, there is some repair $\calR'$ of $\calK_{\Hyp}$ with $\calR \subseteq \calR'$.
	But then we have $\tup{\calT, \calR'} \models \tup{\calT, \calR \cup \{\alpha\}} \models \bot$, which is a contradiction.
\end{proof}
\end{longonly}

\begin{longonly}
We now relate the problem of checking whether $\{\alpha\}$ is conflict-confining to non-entailment under brave semantics.
\end{longonly}

\begin{longonly}
\begin{lemma}\label{lem:conf-pres-redu}
	Let $\tup{\calK, \alpha}$ be an \ar-abduction problem.
	Then $\calK \not\models_{\text{brave}} \alpha$ if and only if $\{\neg\alpha\}$ is conflict-confining for $\calK$\cref{fn:negation}.
  Moreover, the same is true with the roles of $\alpha$ and $\neg\alpha$ swapped.
\end{lemma}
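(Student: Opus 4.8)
The plan is to reduce both sides of the equivalence to a single combinatorial condition on $\calA$, namely the existence of an assertion in $\calA$ that single-handedly supports $\alpha$. Writing $\alpha = A(a)$, I would first unfold what it means for $\{\neg\alpha\}$ to fail to be conflict-confining. Since inconsistency is monotone, every conflict of $\calK$ survives in $\tup{\calT, \calA \cup \{\neg\alpha\}}$, so $\Conf(\calK) \subseteq \Conf(\tup{\calT, \calA \cup \{\neg\alpha\}})$ always holds and conflict-confiningness is equivalent to the absence of \emph{new} conflicts; any new conflict must contain the only new assertion $\neg\alpha$. By property~(2) conflicts have size $2$, and in particular $\{\neg\alpha\}$ itself is consistent, so a new conflict is exactly a pair $\{\beta, \neg\alpha\}$ with $\beta \in \calA$. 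A direct semantic argument (using the reading of the negated assertion fixed in the accompanying footnote) shows that $\tup{\calT, \{\beta, \neg\alpha\}} \models \bot$ iff every model of $\tup{\calT, \{\beta\}}$ places $a$ in $A$, i.e.\ iff $\tup{\calT, \{\beta\}} \models \alpha$. Hence $\{\neg\alpha\}$ is \emph{not} conflict-confining iff some single assertion $\beta \in \calA$ supports $\alpha$.

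Next I would characterize $\calK \models_{\brave} \alpha$ by the same condition. By definition this holds iff some repair $\calR$ satisfies $\tup{\calT, \calR} \models \alpha$, and by property~(3) (minimal supports have size $1$) this is equivalent to the existence of a single $\beta \in \calR$ with $\tup{\calT, \{\beta\}} \models \alpha$. The forward direction then follows from $\calR \subseteq \calA$; for the converse, any such supporting $\beta$ is consistent with $\calT$ (property~(2) forbids singleton conflicts), hence extends to a maximal consistent subset, i.e.\ a repair $\calR \ni \beta$, and monotonicity gives $\tup{\calT, \calR} \models \alpha$. Thus $\calK \models_{\brave} \alpha$ iff some $\beta \in \calA$ supports $\alpha$. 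Combining the two characterizations yields that $\{\neg\alpha\}$ is not conflict-confining iff $\calK \models_{\brave} \alpha$, and contraposition gives the stated equivalence.

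For the \enquote{moreover} part I would rerun the argument with $\alpha$ and $\neg\alpha$ interchanged: $\{\alpha\}$ is not conflict-confining iff some $\beta \in \calA$ forms a conflict $\{\beta, \alpha\}$, which happens iff $\tup{\calT, \{\beta\}} \models \neg\alpha$, and $\calK \models_{\brave} \neg\alpha$ iff some repair, hence some single consistent $\beta \in \calA$, entails $\neg\alpha$. The one point requiring care is that this symmetric direction invokes the \emph{dual} of property~(3): that minimal $\calT$-supports of the negated query $\neg\alpha$ also have size $1$. I expect this to be the main obstacle, and I would justify it using the same model-theoretic features of $\DLLite$ and $\ELbot$ that underlie properties~(2) and~(3), since a negative consequence $\neg A(a)$ on a named individual is triggered by a single disjointness-type interaction, so no two-assertion support is ever minimal. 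Once this dual support property is established, the two steps above transfer verbatim.
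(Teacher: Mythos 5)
Your proof has a genuine scope gap: it only establishes the lemma for \DLLite, while the statement is about an arbitrary \ar-abduction problem and the paper needs it for \ELbot as well (it is invoked, via \cref{lem:conf-pres}, in the \ELbot hardness argument of \cref{thm:triv-complexity}). Both of your intermediate characterizations --- \enquote{$\{\neg\alpha\}$ is not conflict-confining iff some single $\beta \in \calA$ supports $\alpha$} and \enquote{$\calK \models_\brave \alpha$ iff some single $\beta \in \calA$ supports $\alpha$} --- rest on properties (2) and (3), which the paper states only for \DLLite and which fail in \ELbot, where conflicts and minimal supports can have arbitrary size. Concretely, take $\calT = \{A \sqcap B \sqsubseteq C,\; B \sqcap E \sqsubseteq \bot\}$, $\calA = \{A(a), B(a), E(a)\}$, $\alpha = C(a)$. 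This is a valid \ar-abduction problem (the repairs are $\{A(a),B(a)\}$ and $\{A(a),E(a)\}$, so $\calK$ is inconsistent and $\calK \not\models_\ar \alpha$); here $\calK \models_\brave \alpha$ and $\{\neg\alpha\}$ is not conflict-confining (the new conflict $\{A(a), B(a), \neg C(a)\}$ has size three), yet no single assertion of $\calA$ supports $\alpha$. So both links of your chain are false in \ELbot, and the argument collapses even though the end-to-end equivalence happens to hold. A secondary issue, which you flag yourself, is that even within \DLLite the \enquote{moreover} part needs the unproved dual of property (3) (size-one supports of $\neg\alpha$), and properties (2)/(3) are stated for ABoxes in the DL syntax, not for ABoxes augmented with the non-DL assertion $\neg\alpha$.

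All of this size-counting machinery is avoidable. The paper's proof is two lines and DL-independent: it works at the level of repairs, using the equivalent characterization of conflict-confining stated after \cref{def:hypotheses}, namely that $\Hyp$ is conflict-confining iff $\tup{\calT, \calR \cup \Hyp} \not\models \bot$ for every repair $\calR$ of $\calK$. If $\calK \models_\brave \alpha$, the witnessing repair $\calR$ satisfies $\tup{\calT, \calR \cup \{\neg\alpha\}} \models \bot$, so $\{\neg\alpha\}$ is not conflict-confining; conversely, if some repair $\calR$ satisfies $\tup{\calT, \calR \cup \{\neg\alpha\}} \models \bot$, then since $\tup{\calT,\calR}$ is consistent this forces $\tup{\calT, \calR} \models \alpha$, i.e.\ $\calK \models_\brave \alpha$. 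This argument applies verbatim with $\alpha$ and $\neg\alpha$ swapped. To repair your write-up, replace the reduction to single-assertion supports by this direct repair-level argument (proving the equivalent characterization of conflict-confining if you do not want to take it as given).
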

\begin{proof}
  Let $\calK = \tup{\calT, \calA}$.
	If $\calK \models_{\text{brave}} \alpha$, then there is some repair $\calR$ of $\calK$ with $\tup{\calT, \calR} \models \alpha$.
	But then $\tup{\calT, \calR \cup \{\neg \alpha\}} \models \bot$, so $\{\neg \alpha\}$ is not conflict-confining for $\calK$.
	
	On the other hand, if $\{\neg \alpha\}$ is not conflict-confining for $\calK$, then there is a repair $\calR$ of $\calK$ with $\tup{\calT, \calR \cup \{\neg \alpha\}} \models \bot$.
	As $\calR$ is consistent with $\calT$, but $\calR \cup \{\neg \alpha\}$ is not consistent with $\calT$, this implies that $\tup{\calT, \calR} \models \alpha$.
	This shows that $\calK \models_{\text{brave}} \alpha$.

  Note that the argument also applies when swapping the roles of $\alpha$ and $\neg\alpha$.
\end{proof}

The following Lemma combines the results of Lemmata~\ref{lem:conf-pres-1},~\ref{lem:conf-pres-2} and~\ref{lem:conf-pres-redu} to provide equivalent conditions for the existence of an \ar-hypothesis.
\end{longonly}

\begin{lemma}\label{lem:conf-pres}
  Let $\tup{\calK, \alpha}$ be an \ar-abduction problem.
  The following are equivalent:
  (1) there is an \ar-hypothesis for $\tup{\calK, \alpha}$, (2) $\{\alpha\}$ is an \ar-hypothesis for $\tup{\calK, \alpha}$, (3) $\{\alpha\}$ is conflict-confining for \calK, and (4) $\calK \not\models_\brave \neg\alpha$.
\begin{shortonly}
  \footnote{Here, entailment of $\neg \alpha$ has the usual meaning, even if negation is not in the logical language.}
\end{shortonly}
\begin{longonly}
  \footnote{\label{fn:negation} Here, $\{\neg\alpha\}$ being conflict-confining and entailment of $\neg\alpha$ have the usual meaning, even if negation is not in the logical language.}
\end{longonly}
\end{lemma}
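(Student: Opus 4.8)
The plan is to prove the four-way equivalence by closing a short cycle of implications among (1)--(3) and then attaching (4) through a single biconditional, reusing the three preceding lemmata as black boxes. I would begin with the cheapest arrow, (2) $\Rightarrow$ (1): if the singleton $\{\alpha\}$ is an \ar-hypothesis, then a fortiori an \ar-hypothesis exists, so (1) holds by the definition of the existence property. For (1) $\Rightarrow$ (3) I would take the contrapositive of Lemma~\ref{lem:conf-pres-2}, which says that whenever $\{\alpha\}$ fails to be conflict-confining no \ar-hypothesis exists at all; hence the existence of some hypothesis forces $\{\alpha\}$ to be conflict-confining. The remaining arrow (3) $\Rightarrow$ (2) is exactly Lemma~\ref{lem:conf-pres-1}, which upgrades conflict-confiningness of $\{\alpha\}$ to $\{\alpha\}$ being itself an \ar-hypothesis. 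These three implications close the loop $(2)\Rightarrow(1)\Rightarrow(3)\Rightarrow(2)$ and so identify conditions (1), (2) and (3).

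To fold in (4), I would invoke Lemma~\ref{lem:conf-pres-redu} in its swapped form, where it reads $\calK \not\models_\brave \neg\alpha$ if and only if $\{\alpha\}$ is conflict-confining for $\calK$; this is precisely $(4)\Leftrightarrow(3)$. Appending this biconditional to the already established equivalence of (1)--(3) yields the full statement. The footnote convention, namely that $\neg\alpha$ and its conflict-confiningness are read semantically even when negated assertions are not syntactically available, is all that is required for (4) to be well defined.

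I do not expect a genuine obstacle in this assembly, since all the mathematical content sits in the three lemmata and only the orientation of the cycle has to be chosen correctly. The one delicate point is that I rely on Lemma~\ref{lem:conf-pres-2} solely in the direction \enquote{not conflict-confining $\Rightarrow$ no hypothesis}; its force is mildly counterintuitive in light of the non-convexity of the \ar-hypothesis set noted just before the complexity table, as one might hope to rescue a bad repair $\calR$ with $\tup{\calT, \calR \cup \{\alpha\}} \models \bot$ by adding further assertions. That hope fails because every consistent $\calR$ extends to a repair $\calR'$ of the enlarged KB in which the hypothesis property forces $\tup{\calT, \calR'} \models \alpha$; together with $\calR \subseteq \calR'$ this renders $\calR'$ inconsistent, contradicting that it is a repair.
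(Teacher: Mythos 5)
Your proposal is correct and matches the paper's intended argument: the paper states that Lemma~\ref{lem:conf-pres} simply combines Lemmata~\ref{lem:conf-pres-1}, \ref{lem:conf-pres-2} and~\ref{lem:conf-pres-redu}, and your cycle $(2)\Rightarrow(1)\Rightarrow(3)\Rightarrow(2)$ (via the contrapositive of Lemma~\ref{lem:conf-pres-2} and Lemma~\ref{lem:conf-pres-1}) together with the swapped form of Lemma~\ref{lem:conf-pres-redu} for $(3)\Leftrightarrow(4)$ is exactly that assembly. Your closing remark about why adding assertions cannot rescue a repair $\calR$ with $\tup{\calT,\calR\cup\{\alpha\}}\models\bot$ also reproduces the paper's own proof of Lemma~\ref{lem:conf-pres-2}.
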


\begin{longonly}
Finally, we use the previous Lemmata to show that the complexity of the existence problem for \ar-abduction problems degenerates to that of the complement of instance checking under brave semantics, both for \ELbot and \DLLite.

\end{longonly}

\begin{theorem}\label{thm:triv-complexity}
	The existence problem for \ar-hypotheses is \coNP-complete for \ELbot and \NL-complete for \DLLite.
  Moreover, the problem is trivial for \brave-hypotheses in both DLs.
\end{theorem}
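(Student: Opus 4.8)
The plan is to prove the \brave claim directly and to reduce the \ar claim to the complexity of \brave entailment of the \emph{negated} observation, via \cref{lem:conf-pres}. For \brave semantics the statement is immediate from the observation made just before that lemma: the singleton $\{\alpha\}$ is always a \brave-hypothesis for $\tup{\calK,\alpha}$, so a solution exists for every instance and the existence problem is trivial. For \ar semantics, \cref{lem:conf-pres} tells us that $\tup{\calK,\alpha}$ has an \ar-hypothesis if and only if $\calK\not\models_\brave\neg\alpha$; hence the existence problem is exactly the complement of \brave entailment of $\neg\alpha$, and it suffices to determine the complexity of deciding whether some repair $\calR$ of $\calK$ satisfies $\tup{\calT,\calR\cup\{\alpha\}}\models\bot$, then take complements.

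For the upper bounds I would use that inconsistency is monotone and that every \calT-consistent ABox extends to a repair, so $\calK\models_\brave\neg\alpha$ holds iff there is a \calT-consistent $\calB\subseteq\calA$ with $\tup{\calT,\calB\cup\{\alpha\}}\models\bot$. For \ELbot I would guess such a $\calB$ and verify in polynomial time that $\calB$ is consistent while $\calB\cup\{\alpha\}$ is not; this puts \brave entailment of $\neg\alpha$ in \NP and the existence problem in \coNP. For \DLLite I would instead invoke property~(2), that minimal \calT-inconsistent ABoxes have size $2$: then $\calB$ may be taken to be a single assertion $x\in\calA$, so one only guesses $x$ and checks that $\{x\}$ is consistent but $\{x,\alpha\}$ is not, an \NL computation. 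Taking complements and using $\co\NL=\NL$ yields membership in \NL for \DLLite.

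For the matching lower bounds I would reduce the complement of \brave entailment of an atomic BIQ to the existence problem. Given $\tup{\calT,\calA}$ and $A(a)$, introduce a fresh concept name $A'$, set $\calT'=\calT\cup\{A\sqcap A'\sqsubseteq\bot\}$, and take the observation $\alpha'=A'(a)$; since $\tup{\calT',\calB\cup\{A'(a)\}}\models\bot$ iff $\tup{\calT,\calB}\models A(a)$, we get $\calK'\models_\brave\neg\alpha'$ iff $\calK\models_\brave A(a)$, so by \cref{lem:conf-pres} the constructed abduction problem admits an \ar-hypothesis iff $\calK\not\models_\brave A(a)$. As \brave entailment of atomic BIQs is \NP-hard for \ELbot and \NL-hard for \DLLite, its complement is \coNP-hard, respectively \NL-hard, which transfers through the reduction and gives completeness together with the upper bounds above. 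The main obstacle is that the existence problem is a promise problem over genuine \ar-abduction inputs, so the reduction must output an inconsistent $\calK'$ with $\calK'\not\models_\ar\alpha'$: I would attach a disjoint conflict gadget (fresh $P,Q,d$ with $P\sqcap Q\sqsubseteq\bot$ and assertions $P(d),Q(d)$) to force inconsistency of $\calK'$ without disturbing the reduction, and observe that $A'$ appears in no assertion and on no axiom right-hand side, so no repair entails $A'(a)$ and hence $\calK'\not\models_\ar\alpha'$; one must finally confirm that the whole construction is computable in logarithmic space so that the \NL-hardness reduction is legitimate.
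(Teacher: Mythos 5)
Your proposal follows essentially the same route as the paper: triviality for \brave via the hypothesis $\{\alpha\}$, reduction of \ar-existence to the complement of \brave entailment of $\neg\alpha$ through \cref{lem:conf-pres}, upper bounds that exploit the size-2 conflict property for \DLLite (the paper checks assertions of $\calA$ pairwise against $\alpha$, you guess one such assertion --- same argument), and hardness via a fresh concept $A'$ made disjoint from $A$, exactly as in the paper. The only fix needed is syntactic: for \DLLite the axioms $A \sqcap A' \sqsubseteq \bot$ and $P \sqcap Q \sqsubseteq \bot$ are not legal inclusions (conjunction cannot occur on the left-hand side), so, as the paper notes, the disjointness must be written $A \sqsubseteq \neg A'$ (and your gadget as $P \sqsubseteq \neg Q$). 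Everything else is sound, including your explicit inconsistency gadget to satisfy the promise that $\calK'$ is inconsistent with $\calK' \not\models_\ar \alpha'$, a point the paper handles only implicitly by relying on the source brave-entailment instances already being inconsistent.
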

\begin{longonly}
\begin{proof}
  The case of \brave-hypotheses directly follows from the fact, that $\{\alpha\}$ is a \brave-hypothesis for $\alpha$ in \calK for any atomic BIQ $\alpha$ and KB \calK.
  The remainder of the proof handles the case of \ar-hypotheses.
	We begin with the case of \ELbot.
  Let a KB $\calK = \tup{\calT, \calA}$ and observation $\alpha$ be given.
	
	\emph{Membership:} By \cref{lem:conf-pres}, checking whether there is an \ar-hypothesis for $\alpha$ in \calK is equivalent to checking whether $\tup{\calT, \calA \cup \{\alpha\}} \models_{\text{AR}} \alpha$.
	The latter can be checked in \co\NP, as it is a special case of instance checking under AR semantics.
	
	\smallskip
	\noindent
	\emph{Hardness:}
  By \cref{lem:conf-pres}, checking whether there is an \ar-hypothesis for $\alpha$ in \calK is equivalent to checking whether $\{\alpha\}$ is conflict-confining for \calK.
  Further, \cref{lem:conf-pres-redu} gives rise to a reduction from checking whether $\calK \not\models_{\text{brave}} \alpha$ to the problem of checking whether $\neg \alpha$ is conflict-confining for $\calK$.
	The latter is equivalent to checking whether $\{A'(a)\}$ is conflict-confining for $\tup{\calT', \calA}$, where $\calT' \coloneqq T \cup \{A \sqcap A' \sqsubseteq \bot\}$ for a fresh concept name $A'$.
	For this, notice that the set of repairs for $\calK$ and $\tup{\calT', \calA}$ are the same, as $\calA$ does not contain any axioms involving $A'$.
	It is now easy to see that for all repairs $\calR$, we have
	\begin{align*}
		\tup{\calT, \calR \cup \{\neg A(a)\}} \models \bot &\iff \tup{\calT, \calR} \models A(a) \\
		&\iff \tup{\calT', \calR} \models A(a) \\
		&\iff \tup{\calT', \calR \cup \{A'(a)\}} \models \bot.
	\end{align*}
	As instance checking in $\ELbot$ under brave semantics is \NP-hard, this implies \co\NP-hardness of the existence problem for \ar-hypotheses in \ELbot.
	
	\smallskip
	We now turn to \DLLite.
	\NL-hardness can be shown in  the same way as \coNP-hardness for \ELbot above, only changing the syntax of the disjointness axiom to $A \sqsubseteq \neg A'$.
	This uses the facts that instance checking in \DLLite under brave semantics is \NL-hard and \NL is closed under complement.
	As instance checking in \DLLite under AR semantics is \coNP-hard, we argue membership slightly differently from above.
	By~\cref{lem:conf-pres}, $\{\alpha\}$ is conflict-confining for 
	some $\calK = \tup{\calT, \calA}$ if and only if $\tup{\calT, \calA \cup \{\alpha\}} \models_\ar \alpha$.
	The latter is the case if and only if $\alpha$ is not contained in any conflict of $\tup{\calT, \calA \cup \{\alpha\}}$:
  The implication from right to left is obvious.
  On the other hand, if $\tup{\calT, \calA \cup \{\alpha\}} \models_\ar \alpha$, then for all repairs $\calR$ of $\tup{\calT, \calA \cup \{\alpha\}}$, $\calR \cup \{\alpha\}$ is \calT-consistent.
  Hence, $\alpha$ is contained in all repairs and cannot be contained in any conflict of $\tup{\calT, \calA \cup \{\alpha\}}$.
	As for \DLLite conflicts are always of size $2$, we can check whether $\alpha$ is contained in any conflict of $\tup{\calT, \calA \cup \{\alpha\}}$ by iterating over all assertions in $\calA$, and for each of them checking whether they become $\calT$-inconsistent together with $\alpha$.
	The latter can be done in $\NL$ for \DLLite.	
\end{proof}
\end{longonly}

Note that the equivalence of $\{\alpha\}$ being an \ar-hypothesis for $\alpha$ and $\{\alpha\}$ being conflict-confining means that this result applies both to general and conflict-confining \ar-hypotheses.
The set $\{\alpha\}$ being a conflict-confining \ar-hypothesis also implies that there is a conflict-confining \brave-hypothesis (namely $\{\alpha\}$).
Still, the complexity of the existence problem for conflict-confining \brave-hypothesis remains open:
There are cases where there is a conflict-confining \brave-hypothesis for $\alpha$, but $\{\alpha\}$ is not conflict-confining.

\subsection{Signature-based Setting --- Restricting the Signature of Hypotheses }

As we just have seen, checking existence of hypotheses without additional restrictions degenerates to entailment, or even a special case of entailment, because the observation itself can be a hypothesis.
A natural way to prevent this is to restrict the signature of hypotheses, that is, only consider hypothesis over some signature $\Sigma$ as defined in \cref{def:hypotheses}.

%
%

We begin by characterizing the complexity for consistent KBs under classical semantics. 
It turns out that this classical abduction problem is $\NP$-complete.
Then we consider the setting with inconsistent KBs under repair semantics and prove that the \NP-membership still holds under brave semantics. 
However, the complexity rises to $\SigmaP$-complete under \ar semantics.


\begin{theorem}\label{thm:consistent-signature-elbot}
  For \ELbot, the existence problem for hypotheses under classical semantics over a given signature $\Sigma$ is \NP-complete.
\end{theorem}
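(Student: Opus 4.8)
My plan is to prove the upper and lower bounds separately. For \NP-membership, the key observation is that a \emph{flat} hypothesis \emph{over $\Sigma$} may use only atomic concept names, role names, and individual names drawn from $\Sigma$. Consequently, the set of all flat assertions available over $\Sigma$ --- concept assertions $A(b)$ and role assertions $r(b,c)$ with $A,r,b,c \in \Sigma$ --- has size polynomial in $|\Sigma|$, and every candidate hypothesis is a subset of this polynomially bounded set. A nondeterministic procedure can therefore guess a subset $\Hyp$ of these assertions and verify in polynomial time that (i) $\tup{\calT, \calA \cup \Hyp} \not\models \bot$ and (ii) $\tup{\calT, \calA \cup \Hyp} \models \alpha$; both are ordinary consistency and instance checking for \ELbot and hence feasible in polynomial time. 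This yields the \NP upper bound, with the restriction of individual names to $\Sigma$ being exactly what keeps the search space polynomial.

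For \NP-hardness I would reduce from propositional satisfiability. Given a CNF formula $\varphi$ over variables $x_1,\dots,x_n$ with clauses $c_1,\dots,c_m$, I introduce for each variable $x_i$ two concept names $T_i,F_i$, for each clause $c_j$ an auxiliary concept name $D_j$, and a target concept name $\mathit{Sat}$. The TBox \calT contains $T_i \sqcap F_i \sqsubseteq \bot$ for every $i$, the inclusions $T_i \sqsubseteq D_j$ whenever $x_i$ occurs positively in $c_j$ and $F_i \sqsubseteq D_j$ whenever it occurs negatively, together with the single axiom $D_1 \sqcap \dots \sqcap D_m \sqsubseteq \mathit{Sat}$. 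I take $\calA = \emptyset$, observation $\alpha = \mathit{Sat}(a)$ for a single individual $a$, and signature $\Sigma = \{T_i, F_i \mid 1 \le i \le n\} \cup \{a\}$, deliberately excluding $\mathit{Sat}$ and the $D_j$ so that these cannot be asserted directly.

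It then remains to argue correctness. First, $\tup{\calT,\emptyset}$ is consistent (witnessed by a model where $a$ lies in no concept) and does not entail $\mathit{Sat}(a)$, so this is a legitimate classical abduction problem. Because $\Sigma$ contains only the individual $a$ and no role names, any hypothesis over $\Sigma$ is a subset of $\{T_i(a), F_i(a) \mid 1 \le i \le n\}$. Consistency forbids having both $T_i(a)$ and $F_i(a)$, so a consistent hypothesis encodes a partial truth assignment; and since $\mathit{Sat}$ is derivable only through the conjunction of all $D_j$, and each $D_j$ only from a literal concept of a clause containing that literal, one has $\tup{\calT, \Hyp} \models \mathit{Sat}(a)$ exactly when the encoded assignment satisfies every clause. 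Hence a hypothesis over $\Sigma$ exists iff $\varphi$ is satisfiable, completing the hardness argument.

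I expect the main obstacle to lie in the hardness direction: specifically, pinning down that consistency precisely captures \enquote{at most one truth value per variable} while entailment of $\mathit{Sat}(a)$ precisely captures \enquote{all clauses covered}, and checking that the signature restriction genuinely blocks the degenerate hypotheses (such as asserting $\mathit{Sat}(a)$ or some $D_j(a)$ directly) that would otherwise trivialise the instance. By contrast, the membership direction is routine once flatness and the restriction to $\Sigma$-individuals are invoked to bound the candidate space.
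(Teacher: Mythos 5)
Your proposal is correct and matches the paper's own proof essentially verbatim: the same guess-and-verify argument (with the signature restriction bounding the candidate space) for \NP-membership, and the same SAT reduction for hardness, with your $T_i, F_i, D_j, \mathit{Sat}$ playing exactly the roles of the paper's $A_x, A_{\bar x}, A_c, A_\varphi$ and the signature likewise restricted to the literal concepts and the single individual. The correctness argument, including the observation that a consistent hypothesis encodes a (partial) assignment extendable to a full satisfying one, also coincides with the paper's.
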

\begin{longonly}
\begin{proof}
	\emph{Membership:} Guess a set of assertions over the signature $\Sigma$ (i.e., guess $\Hyp \subseteq \{A(x), r(x,y) \mid A, r, x, y \in \Sigma\}$), then verify that $\calA \cup \Hyp$ is consistent with \calT and $\tup{\calT, \calA \cup \Hyp} \models \alpha$. 
	Both checks can be performed in polynomial time.
	
	\emph{Hardness:} We reduce from propositional satisfiability.
	To this aim, let $\varphi = \{c_1,\dots,c_p\}$ be a CNF formula over propositions $X =\{x_1,\dots,x_n\}$, where each $c_i$ is a clause.
	A literals $\ell$ is a variable $x$ or a negated variable $\neg x$.
	For a literal $\ell$, we denote by $\bar\ell$ its ``opposite'' literal.
	For a set $Z$ of variables, $\Lit(Z)$ denotes the collection of literals over $Z$.
	We construct the KB $\calK = \tup{\calT, \calA}$ with concept names $N=\{A_x, A_{\bar x} \mid x\in X\} \cup \{A_c \mid c \in\varphi\} \cup \{A_\varphi\}$, where
	\[
		\calT = \{A_x \sqcap A_{\bar x} \subsum \bot \mid x\in X\} \cup \{A_\ell \subsum A_c \mid \ell \in c, c\in\varphi\} \cup \{\sqcap_{c\in \varphi}A_c\subsum A_\varphi \},
	\]
	and $\calA = \emptyset$. 
	Further, let $\alpha \coloneqq A_\varphi(m)$ and $\Sigma =\{A_x, A_{\bar x}, \mid x\in X\} \cup \{m\}$ for an individual name $m$.
	Now, $\tup{\calK, \alpha}$ together with the signature $\Sigma$ is the desired abduction problem.
	Clearly, $\calK\not\models \alpha$ (since no axiom in $\calK$ involves $m$).
	\begin{claim}
		$\varphi$ is satisfiable if and only if $\alpha$ has a hypothesis over $\Sigma$ in \calK.
	\end{claim}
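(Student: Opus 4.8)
The plan is to exploit the rigidity of the signature $\Sigma$, which contains only the literal concept names $A_x, A_{\bar x}$ together with the single individual $m$, and in particular no role names and none of the ``clause'' or ``formula'' concept names $A_c, A_\varphi$. Since a hypothesis must be flat, this forces every hypothesis $\Hyp$ over $\Sigma$ to be a set of unary assertions of the form $A_x(m)$ or $A_{\bar x}(m)$. First I would observe that such an $\Hyp$ is $\calT$-consistent precisely when it never contains both $A_x(m)$ and $A_{\bar x}(m)$ for the same variable $x$: the only disjointness axioms are the $A_x \sqcap A_{\bar x} \sqsubseteq \bot$, while the remaining axioms ($A_\ell \sqsubseteq A_c$ and $\sqcap_{c\in\varphi} A_c \sqsubseteq A_\varphi$) can never derive $\bot$. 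Consistent hypotheses over $\Sigma$ therefore correspond exactly to partial truth assignments $\nu$, reading $A_x(m) \in \Hyp$ as $\nu(x) = 1$ and $A_{\bar x}(m) \in \Hyp$ as $\nu(x) = 0$.

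For the direction from satisfiability to existence, given a satisfying assignment $\nu$ of $\varphi$ I would take $\Hyp \coloneqq \{A_x(m) \mid \nu(x) = 1\} \cup \{A_{\bar x}(m) \mid \nu(x) = 0\}$. Consistency is immediate from the correspondence above. For entailment, each clause $c$ contains a literal $\ell$ true under $\nu$, so $A_\ell(m) \in \Hyp$, and the axiom $A_\ell \sqsubseteq A_c$ yields $\tup{\calT, \Hyp} \models A_c(m)$. As this holds for every clause, the axiom $\sqcap_{c\in\varphi} A_c \sqsubseteq A_\varphi$ gives $\tup{\calT, \Hyp} \models A_\varphi(m) = \alpha$, so $\Hyp$ is a hypothesis over $\Sigma$.

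For the converse, I would start from an arbitrary hypothesis $\Hyp$ over $\Sigma$ and read off a partial assignment $\nu$ as above, completing it arbitrarily on variables that neither literal mentions; well-definedness uses consistency. The crux is to argue that $\tup{\calT, \Hyp} \models A_\varphi(m)$ forces, for each clause $c$, some literal $\ell \in c$ with $A_\ell(m) \in \Hyp$. Because $\calA = \emptyset$ and $\Sigma$ forbids asserting $A_c$ or $A_\varphi$ directly, the only route to $A_\varphi(m)$ is the single axiom $\sqcap_{c\in\varphi} A_c \sqsubseteq A_\varphi$, which requires $A_c(m)$ to be entailed for every $c$; and the only route to $A_c(m)$ is some $A_\ell \sqsubseteq A_c$ with $A_\ell(m)$ entailed, which, since $\Hyp$ only asserts literal atoms, means $A_\ell(m) \in \Hyp$. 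Hence each clause has a literal true under $\nu$, and $\varphi$ is satisfiable.

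The main obstacle is making this last ``only route to derive'' argument rigorous rather than merely intuitive. I would formalize it by constructing the canonical model of $\tup{\calT, \Hyp}$: starting from the literal atoms asserted at $m$, close under the $\ELbot$ completion rules. In this model $m$ belongs to $A_c$ exactly when some $A_\ell$ with $\ell \in c$ was asserted, and to $A_\varphi$ exactly when this holds for all clauses, giving a precise semantic characterization of entailment that immediately yields the clause-by-clause condition. Everything else is routine bookkeeping.
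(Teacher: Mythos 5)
Your proposal is correct and follows essentially the same route as the paper: the forward direction builds the hypothesis $\{A_\ell(m) \mid \ell \text{ true under } \nu\}$ exactly as the paper does, and the backward direction reads off a (partial) assignment from the hypothesis and argues clause-by-clause entailment, just as in the paper. The only difference is that you make the ``only derivation route'' step explicit via the canonical model of $\tup{\calT, \Hyp}$, which the paper leaves implicit as ``easy to see''; this is a welcome touch of rigor but not a different argument.
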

	\begin{claimproof}
  ``$\Longrightarrow$'': Let $s \subseteq \Lit(X)$ be a satisfying assignment for $\varphi$ seen as a set of literals.
	Then, for each clause $c\in\varphi$ there is some $\ell\in s\cap c$.
	We define $\Hyp = \{A_\ell(m) \mid \ell\in s\}$. 
	Since $s$ is an assignment, the set $\Hyp$ is consistent with $\calT$:
	No inconsistency is triggered due to axioms $A_x\sqcap A_{\bar x}\subsum \bot$, as $\Hyp$ only contains exactly one assertion for each variable.
	Now, we prove that $\calK_{\Hyp} \models \alpha$, where $\calK_{\Hyp} \coloneqq \tup{\calT, \Hyp}$.
	Since each clause is satisfied, we have $\calK_{\Hyp} \models A_c(m)$ for each $c\in\varphi$, which in turn implies that $\calK_{\Hyp} \models A_\varphi(m)$ due to the last TBox axiom.
	
	``$\Longleftarrow$'':
	Let $\Hyp$ be a hypothesis for $\alpha$ in $\calK$.
	Observe that there are sets $X1,X_2\subseteq X$ of variables such that $X_1 \cap X_2 = \emptyset$ and $\Hyp$ takes the following form:
	\[\Hyp = \{A_x(m) \mid x\in X_1\}\cup \{A_{\bar x}(m)\mid x\in X_2\}.\]
	This holds, since some disjointness axiom is violated otherwise.
	We define $s_{\Hyp} = \{\ell \mid A_\ell(m) \in \Hyp \}$.
	Then, $s_{\Hyp}$ is a potentially partial assignment over $X$, as for any variable it may not contain both the positive and the negative literal.
	Now, we prove that $s_{\Hyp} \models \varphi$.
	However, this is easy to see, since $\tup{\calT, \Hyp} \models A_c(m)$ for every clause $c \in \varphi$. 
	Consequently, for each $c \in \varphi$, there is some $\ell \in c$, such that $A_\ell(m) \in \Hyp$, which in turn implies that $\ell \in s_{\Hyp}$.
  Obviously, $s_\Hyp$ can also be extended to a full assignment that still satisfies $\varphi$.
	
	This concludes the correctness proof and establishes the claim.
  \end{claimproof}
\end{proof}
\end{longonly}

\paragraph*{The Inconsistent Case.}
Now we analyse the case of inconsistent KBs and repair semantics.

\begin{theorem}\label{thm:signature-complexity}
	For $\ELbot$, 
	the existence problem for $\calS$-hypotheses 
	over a given signature $\Sigma$ is 
	(1) $\NP$-complete for $\calS=\brave$, and
	(2) $\SigmaP$-complete for $\calS=\ar$.
\end{theorem}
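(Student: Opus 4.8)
The plan is to prove two complexity claims about the signature-restricted existence problem for $\ELbot$: $\NP$-completeness under brave semantics and $\SigmaP$-completeness under AR semantics. For each, I must establish matching upper and lower bounds. I would structure the proof as four parts (membership and hardness for each semantics), reusing the hardness construction from \cref{thm:consistent-signature-elbot} where possible.

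\textbf{Brave semantics ($\NP$-completeness).} For membership, I would guess a hypothesis $\Hyp$ over the signature $\Sigma$ whose size is polynomially bounded (I must first argue that if any $\Sigma$-hypothesis exists, a polynomially-sized one does --- here I can exploit fact (3) from the preliminaries that minimal supports of atomic BIQs are of size $1$, so it suffices to guess enough assertions over $\Sigma$ to derive a single support of $\alpha$ that can survive in some repair). After guessing $\Hyp$, I verify that $\tup{\calT, \calA \cup \Hyp} \models_\brave \alpha$, which by the preliminaries is an $\NP$-check. Since an $\NP$-oracle call inside $\NP$ collapses to $\NP$, membership follows. For hardness, I would observe that classical abduction is a special case: if I add axioms making the KB trivially consistent and force repairs to behave classically, or more directly, I reuse the reduction from \cref{thm:consistent-signature-elbot}. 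The brave existence problem subsumes (or is reducible from) the classical case because on a consistent KB, brave entailment coincides with classical entailment and the unique repair is the whole ABox.

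\textbf{AR semantics ($\SigmaP$-completeness).} Membership in $\SigmaP$ follows by guessing a polynomially-bounded $\Sigma$-hypothesis $\Hyp$ (bounding the size again) and then verifying $\tup{\calT, \calA \cup \Hyp} \models_\ar \alpha$ with a $\coNP$-oracle, giving $\NP^{\coNP} = \SigmaP$. \textbf{The hard part will be} the $\SigmaP$-hardness reduction. I would reduce from $\exists\forall$-QBF satisfiability (the canonical $\SigmaP$-complete problem): given $\exists \vec{x}\, \forall \vec{y}\, \varphi(\vec{x},\vec{y})$ with $\varphi$ in CNF (say 3-DNF or using a suitable normal form), I must design a TBox, ABox, observation, and signature so that the \emph{existential} choice of which $\Sigma$-assertions to include in $\Hyp$ encodes the $\vec{x}$-assignment, while the \emph{universal} quantification over repairs encodes the $\forall \vec{y}$ quantifier. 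The key trick is to build the ABox so that its repairs correspond exactly to all assignments of $\vec{y}$ (using disjointness axioms $B_y \sqcap B_{\bar y} \sqsubseteq \bot$ on assertions $B_y(a), B_{\bar y}(a)$ already present in $\calA$, forcing each repair to pick one polarity per $y$-variable), and then $\alpha$ should be AR-entailed precisely when \emph{every} such repair, combined with the $\vec{x}$-choice from $\Hyp$, satisfies $\varphi$. I would adapt the clause-encoding machinery from \cref{thm:consistent-signature-elbot} so that $A_\varphi(m)$ is derivable in a given repair iff the corresponding $(\vec{x},\vec{y})$-assignment satisfies $\varphi$.

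The main obstacle I anticipate is \textbf{engineering the interaction between the existential (hypothesis) and universal (repair) layers} in a single $\ELbot$ KB, since $\ELbot$ lacks disjunction and native negation, so all the quantifier alternation must be simulated through the interplay of existential guessing of $\Sigma$-assertions on the one side and the minimal-inconsistency structure of repairs on the other. I would have to carefully ensure that spurious hypotheses (e.g. adding assertions over $\Sigma$ that tamper with the $\vec{y}$-encoding) cannot help satisfy $\alpha$ --- so the signature $\Sigma$ must be chosen to expose only the $\vec{x}$-concept names and the relevant individual, while keeping the $\vec{y}$-conflict assertions outside $\Sigma$ yet already present in $\calA$. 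Verifying both directions of the correctness claim (satisfiable QBF $\Rightarrow$ $\Sigma$-hypothesis exists, and conversely) will require showing that AR-entailment after adding the guessed $\vec{x}$-assertions holds iff $\forall \vec{y}\,\varphi$ holds under that $\vec{x}$, which is exactly where the repair-as-universal-quantifier correspondence must be airtight.
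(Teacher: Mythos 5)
Your overall architecture matches the paper's: for brave membership you merge the guess of the hypothesis with the guess of a repair; for \ar membership you guess the hypothesis and use a \coNP-oracle for \ar-entailment, giving \SigmaP; for brave hardness you reuse \cref{thm:consistent-signature-elbot}; and for \ar-hardness you encode the existential quantifier in the choice of hypothesis and the universal quantifier in the set of repairs. Three minor points first. The slogan ``an \NP-oracle call inside \NP collapses to \NP'' is false in general ($\NP^{\NP} = \SigmaP$); what saves you is that the single oracle query is positive and existential, so the two guesses can be merged into one, as the paper does. The size bound you worry about is immediate: flat hypotheses over $\Sigma$ can only contain polynomially many assertions, so no support-size argument is needed (and fact~(3) you invoke is stated for \DLLite, not \ELbot). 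Finally, since \cref{def:hypotheses} requires \calK to be \emph{inconsistent}, the reduction from \cref{thm:consistent-signature-elbot} cannot be reused as-is; one must add an artificial inconsistency over fresh names outside $\Sigma$ (your phrase ``making the KB trivially consistent'' has this backwards, though the intent seems right).

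The genuine gap is in the \SigmaP-hardness reduction, precisely at the point where you hedge: the normal form of the matrix. Your concrete plan uses the clause-encoding machinery of \cref{thm:consistent-signature-elbot} ($A_\ell \subsum A_c$, $\sqcap_{c} A_c \subsum A_\varphi$) and demands that $\alpha$ be \ar-entailed iff \emph{every} repair satisfies the CNF $\varphi$. This reduction is sound, but its source problem $\exists X \forall Y\, \varphi$ with \emph{CNF} matrix is only \NP-complete, not \SigmaP-complete: for a fixed $X$-assignment $s$, $\forall Y\, \varphi[s]$ holds iff every clause is either satisfied by $s$ or contains complementary $Y$-literals, a polynomial-time check. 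So as written you prove nothing beyond \NP-hardness. The \SigmaP-complete source has matrix $\neg\varphi$ for CNF $\varphi$ (equivalently, a DNF matrix), and then your stated correspondence breaks: with the positive clause axioms a repair can only \emph{witness} that $\varphi$ is satisfied, whereas you now need every repair to certify that $\varphi$ is \emph{falsified}. The paper's proof does exactly this inversion: it reduces from $\exists Y \forall Z\, \neg\varphi$, places $A_{\bar\varphi}(m)$ in the ABox together with the disjointness $A_\varphi \sqcap A_{\bar\varphi} \subsum \bot$, and exploits subset-maximality of repairs, so that $A_{\bar\varphi}(m)$ survives in a repair exactly when $A_\varphi(m)$ is not derivable, i.e.\ when the repair's assignment falsifies $\varphi$; the observation $C(m)$ is then made to require $A_{\bar\varphi}(m)$ via $\sqcap_{y\in Y} V_y \sqcap A_{\bar\varphi} \subsum C$. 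Without this maximality trick (or an alternative such as committing to a DNF matrix and encoding its terms by axioms $\sqcap_{\ell \in t} A_\ell \subsum A_\psi$), your reduction does not establish \SigmaP-hardness.
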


\begin{proof}
	For (1): An \NP-algorithm for the problem can guess a hypothesis $\Hyp$ over the signature $\Sigma$ and, at the same time, guess a repair $\calR$ of the ABox.
  Then, verify that $\tup{\calT, \calR \cup \Hyp} \not\models \bot$ and $\tup{\calT, \calR \cup \Hyp} \models \alpha$ in polynomial time.
	The \NP-hardness can be shown by slightly adapting the reduction in Theorem~\ref{thm:consistent-signature-elbot}, adding an artificial inconsistency over fresh concepts not in $\Sigma$.
	
	For (2): The following algorithm shows \SigmaP-membership:
  Guess a set $\Hyp$ such that for all repairs $\calR$ of $\tup{\calT, \calA \cup \Hyp}$, we have $\tup{\calT, \calR} \models \alpha$.
	This requires $\NP$-time to guess the set $\Hyp$ and an $\NP$-oracle to guess a repair $\calR$ as a counter example to the entailment, thus resulting in $\SigmaP$-membership.
	For hardness, we reduce from the standard $\SigmaP$-complete problem $\QBF_2$:
	Instances of $\QBF_2$ are of the form $\Phi \coloneqq \exists Y \forall Z \varphi'$, where $\varphi'$ is a Boolean formula over variables $X = Y \cup Z$.
	Without loss of generality, we can assume that $\varphi' = \neg \varphi$ for some Boolean formula $\varphi$ in CNF.
	The problem asks whether $\Phi$ is valid (or true).
	We construct the following KB $\calK = \tup{\calT, \calA}$, using concept names $N = \{A_x, A_{\bar x} \mid x \in X\} \cup \{V_y \mid y \in Y\} \cup \{A_c \mid c \in \varphi\} \cup \{A_{\varphi}, A_{\bar\varphi}, C\}$.
	The TBox $\calT$ contains the following sets of axioms:
  \begin{align*}
 		\{A_x \sqcap A_{\bar x} \subsum \bot \mid x\in X\} &\quad \text{(ensures a valid assignment over $X$)}, \\
		\{A_\ell \subsum A_c \mid \ell \in c, c\in\varphi\} &\quad \text{(each clause is satisfied)}, \\
		\{\sqcap_{c\in \varphi}A_c\subsum A_\varphi , A_\varphi \sqcap A_{\bar\varphi}\subsum \bot\} &\quad \text{(the formula $\varphi$ is satisfied)}, \\
		\{A_y \subsum V_y, A_{\bar y} \subsum V_y \mid y\in Y\} &\quad \text{(hypotheses over $\Sigma$ are assignments over $Y$), and} \\
		\{\sqcap_{y\in Y} V_y \sqcap A_{\bar \varphi} \subsum C\} &\quad \text{(confirm the above with a concept name $C$)}. 
  \end{align*}

	Further, let $\calA \coloneqq \{A_z(m), A_{\bar z}(m)\mid z\in Z\}\cup \{A_{\bar\varphi}(m)\}$ for an individual name $m$.
	Finally, let $\Sigma \coloneqq \{m\} \cup \{A_y, A_{\bar y} \mid y \in Y\}$ and $\alpha \coloneqq C(m)$.
	Now $\tup{\calK, \alpha}$ together with the signature $\Sigma$ is the desired abduction problem.

  We first observe that $\tup{\calK, \alpha}$ is a valid \ar-abduction problem:
  Obviously, $\calK \models \bot$ when $Z$ is non-empty, due to both $A_z(m)$ and $A_{\bar z}(m)$ being present in the ABox for every $z \in Z$.
  Also, $\calK\not\models_{\ar} \alpha$, as $\calA$ does neither involve the concept name $C$ nor any of the concept names $A_y$, $A_{\bar y}$, or $V_y$ for $y \in Y$.
  The following claim states correctness of the reduction.
	\begin{claim}
		$\Phi$ is true if and only if $\alpha$ has an AR-hypothesis over the signature $\Sigma$ in $\calK$.
	\end{claim}
	
	\begin{claimproof}
	``$\Longrightarrow$'':
	Suppose $\Phi$ is true.
	Then there is an assignment $s \subseteq \Lit(Y)$ such that for all assignments $t \subseteq \Lit(Z)$, $\neg\varphi[s,t]$ is true.
\begin{shortonly}
  Here, $\Lit(\cdot)$ denotes the set of literals over a given set of variables.
\end{shortonly}
	We construct an AR-hypothesis for $\alpha$ from $s$.
	Let $\Hyp_s = \{A_p(m) \mid p\in s\}$.
	Obviously, $\Hyp_s$ is an ABox over $\Sigma$.
	Also, it does not violate any axiom of the form $A_y \sqcap A_{\bar y} \subsum \bot$, since $s$ is an assignment.
	
	We prove that $\tup{\calT, \calA \cup \Hyp_s} \models_{\ar} \alpha$.
	Consider any repair $\calR$ of $\tup{\calT, \calA \cup \Hyp_s}$.
	As $\tup{\calT, \calR} \not\models \bot$, \calR does not violate any axiom of the form $A_x \sqcap A_{\bar x} \sqsubseteq \bot$.
	Hence, $\calR \cap \{A_x(m), A_{\bar x}(m) \mid x \in X\}$ corresponds to (potentially partial) assignments $s_\calR \subseteq s$ and $t_\calR$ over $Y$ and $Z$, respectively.
	We first prove that $\tup{\calT, \calR} \not\models A_\varphi(m)$.
	Suppose to the contrary, that $\tup{\calT, \calR} \models A_\varphi(m)$.
	As \calR is consistent with \calT, this only happens by triggering the axiom $\sqcap_{c \in \varphi} A_c \sqsubseteq A_\varphi$, and in turn an axiom of the form $A_\ell \sqsubseteq A_c$ for each clause $c \in \varphi$.
	But this means that $s_\calR \cup t_\calR$, and hence also $s \cup t_\calR$, is a satisfying assignment for $\varphi$, which is a contradiction to $\neg\varphi[s,t]$ being true for all assignments $t$ over $Z$.
  Indeed, as this argument covers the case $s_\calR = s$, subset-maximality of repairs further yields that $\Hyp_s \subseteq \calR$.
	Moreover, subset-maximality together with the fact that $\tup{\calT, \calR} \not\models A_\varphi(m)$ yields that $A_{\bar \varphi}(m) \in \calR$.
	Consequently, $\tup{\calT, \calR} \models C(m)$.
	
	``$\Longleftarrow$'':
	Suppose $\Phi$ is false.
	Then, for each assignment $s \subseteq \Lit(Y)$, there is an assignment $t \subseteq \Lit(Z)$ such that $\neg\varphi[s,t]$ is false or, equivalently, $\varphi[s,t]$ is true.
	The latter can be stated as: each clause $c\in \varphi$ contains some literal $\ell \in c$ with $\ell\in s\cup t$.
	
	We now prove that $\alpha$ does not have any AR-hypothesis over $\Sigma$ in $\calK$.
	For 
	contradiction, assume that $\Hyp \subseteq \{A_p(m) \mid p \in \Lit(Y)\}$ is such a hypothesis and consider any repair \calR of $\tup{\calT, \calA  \cup \Hyp}$.
	As $\calR$ does not violate any axiom of the form $A_x \sqcap A_{\bar x} \sqsubseteq \bot$, the subset $\calR_Y \coloneqq \calR \cap \{A_y(m), A_{\bar y}(m) \mid y \in Y\}$ corresponds to a potentially partial assignment $s_\calR$ over $Y$.
	On the other hand, as $\tup{\calT, \calR} \models C(m)$, we also have $\tup{\calT, \calR} \models \sqcap_{y \in Y} V_y(m)$.
	Therefore, $\calR$ contains at least one assertion from $\{A_y(m), A_{\bar y}(m)\}$ for each $y\in Y$, i.e.\ that $s_\calR$ is a full assignment over $Y$.
	By our assumption, there is an assignment $t$ over $Z$ s.t.\ $\varphi[s_\calR, t]$ is true.
	
	Let $\calR_t \coloneqq \calR_Y \cup \{A_\ell(m) \mid \ell \in t\}$.
	Obviously, $\calR_t$ does not violate any of the disjointness axioms in \calT, as it does not contain $A_{\bar \varphi}(m)$ and $s_\calR \cup t$ is an assignment over $X$.
	This further means that $\tup{\calT, \calR_t} \not\models C(m)$.
	Furthermore, $\calR_t$ is subset-maximal: As both $s_\calR$ and $t$ are full assignments, we cannot add any assertion of the form $A_x(m)$ or $A_{\bar x}(m)$ for $x \in X$ without violating one of the disjointness axioms.
	Also, as $\varphi[s_\calR, t]$ is true, we have $\tup{\calT, \calR_t} \models A_\varphi(m)$.
	Hence, we also cannot add $A_{\bar \varphi}(m)$ without violating the corresponding disjointness axiom.
	This shows that $\calR_t$ is a repair of $\tup{\calT, \calA \cup \Hyp}$ that does not entail $\alpha$, contradicting our assumption.
	
	This proves the correctness of the claim and establishes the theorem.%
	\end{claimproof}
\end{proof}

We now turn to \DLLite, where we show that checking existence for \brave-hypotheses has the same complexity as \brave entailment.

\begin{theorem}\label{thm:signature-complexity-dllite}
	For $\DLLite$, the existence problem for $\brave$-hypotheses over a given signature $\Sigma$ is $\NL$-complete.
\end{theorem}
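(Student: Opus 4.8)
The plan is to reduce the existence question to a statement about \emph{single} assertions, exploiting the two shared properties of \DLLite\ that minimal \calT-supports of an atomic BIQ have size $1$ and that every \calT-consistent ABox extends to a repair. Writing $\calK = \tup{\calT, \calA}$, I would first establish the characterization that a \brave-hypothesis over $\Sigma$ exists if and only if there is a single assertion $\beta$ over $\Sigma$ such that $\{\beta\}$ is \calT-consistent and $\tup{\calT, \{\beta\}} \models \alpha$. For the direction from right to left, given such a $\beta$ I take $\Hyp = \{\beta\}$: since $\{\beta\}$ is \calT-consistent it extends to a subset-maximal \calT-consistent subset $\calR$ of $\calA \cup \{\beta\}$ containing $\beta$, hence a repair of $\tup{\calT, \calA \cup \Hyp}$, and $\beta \in \calR$ gives $\tup{\calT, \calR} \models \alpha$, so $\tup{\calT, \calA \cup \Hyp} \models_{\brave} \alpha$.

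For the converse I would use the promise crucially. Suppose $\Hyp$ over $\Sigma$ is a \brave-hypothesis; then some repair $\calR$ of $\tup{\calT, \calA \cup \Hyp}$ entails $\alpha$, and since minimal \calT-supports have size $1$, there is a single $\beta \in \calR$ with $\tup{\calT, \{\beta\}} \models \alpha$; as $\calR$ is \calT-consistent, so is $\{\beta\}$. It remains to argue $\beta \in \Hyp$ (hence over $\Sigma$) rather than $\beta \in \calA$. This is exactly where the assumption $\calK \not\models_{\brave} \alpha$ enters: if $\beta \in \calA$, the \calT-consistent set $\{\beta\}$ would extend to a repair of $\calK$ itself that entails $\alpha$, giving $\calK \models_{\brave} \alpha$, a contradiction. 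So $\beta \notin \calA$, and therefore $\beta \in \Hyp$, which is over $\Sigma$.

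Given the characterization, membership in \NL\ is routine: there are only polynomially many candidate single assertions over $\Sigma$ (concept assertions $B(c)$ and role assertions $r(c,d)$ with names drawn from $\Sigma$), each describable in logarithmic space. An \NL\ procedure guesses one such $\beta$ and verifies both that $\{\beta\}$ is \calT-consistent and that $\tup{\calT, \{\beta\}} \models \alpha$; both are ordinary \DLLite\ reasoning tasks decidable in \NL, and \NL\ is closed under the needed intersection and under logspace-bounded existential guessing. For \NL-hardness I would reduce from reachability $(G,s,t)$: introduce a concept name $A_v$ per vertex, put $A_u \sqsubseteq A_v$ in \calT\ for each edge $(u,v)$, set $\alpha = A_t(a)$, and take $\Sigma = \{A_s, a\}$, so that $A_s(a)$ is the \emph{only} single assertion over $\Sigma$. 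To satisfy the promise I add a fresh inconsistency disjoint from $\Sigma$, for instance $P \sqsubseteq \neg Q$ together with $\calA = \{P(b), Q(b)\}$ for a fresh individual $b$; this makes \calK\ inconsistent while still $\calK \not\models_{\brave} A_t(a)$. By the characterization a hypothesis then exists iff $\{A_s(a)\}$ is \calT-consistent (which it is) and $\tup{\calT, \{A_s(a)\}} \models A_t(a)$, i.e.\ iff $t$ is reachable from $s$. Since this uses only concept inclusions $A_u \sqsubseteq A_v$ and one disjointness axiom, it applies to both \DLLiteCore\ and \DLLiteR.

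The main obstacle is getting the characterization right, and specifically its converse direction: combining the size-$1$ minimal-support property with the repair-completion argument, and above all invoking the promise $\calK \not\models_{\brave} \alpha$ to rule out the case $\beta \in \calA$. Once that equivalence is in place, both the \NL\ upper bound and the reachability lower bound are standard.
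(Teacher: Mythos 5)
Your proof is correct, but your membership argument takes a genuinely different route from the paper's. The paper proves $\NL$-membership with a \emph{maximality} trick: it builds in logarithmic space the ABox $\Hyp_m$ of \emph{all} assertions over $\Sigma$, observes that a \brave-hypothesis over $\Sigma$ exists if and only if $\Hyp_m$ itself is one (this is just monotonicity of \brave entailment under adding assertions, and needs neither the promise nor anything DL-Lite-specific), and then performs a single \brave-entailment check, which is in $\NL$ for \DLLite. You instead prove a \emph{minimality} characterization: a hypothesis over $\Sigma$ exists iff some single $\calT$-consistent assertion $\beta$ over $\Sigma$ classically entails $\alpha$, which leans on the DL-Lite-specific size-$1$ support property and, crucially, on the promise $\calK \not\models_\brave \alpha$ to rule out $\beta \in \calA$; you then guess $\beta$ and run two \emph{classical} $\NL$ reasoning checks. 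Each approach buys something: the paper's equivalence is promise-free and logic-agnostic (the same idea underlies its membership arguments for other settings, with only the final entailment check being DL-specific), whereas yours replaces repair-semantics reasoning entirely by classical consistency and instance checking, and yields as a by-product that whenever a hypothesis over $\Sigma$ exists, a singleton one exists --- essentially the same phenomenon the paper isolates in \cref{lem:brave-hyp-size} for the verification section. Your hardness proof (reachability, $\Sigma = \{A_s, a\}$, an artificial inconsistency on a fresh individual outside $\Sigma$) is the same reduction as the paper's, up to renaming of the concepts used for the inconsistency.
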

\begin{longonly}
\begin{proof}
  \emph{Membership:}
  Let $\Hyp_m$ be the set of all assertions over $\Sigma$.
  This set can be constructed in logarithmic space.
  It is easy to see that there is a \brave-hypothesis over $\Sigma$ for $\alpha$ in \calK if and only if $\Hyp_m$ is such a hypothesis.
  Hence, we only have to check whether $\tup{\calT, \calA \cup \Hyp_m} \models_\brave \alpha$, and membership follows as $\brave$-entailment for $\DLLite$ is in $\NL$.
	
  Hardness can be shown by a straightforward reduction from reachability in directed graphs.
	Let $G = (V,E)$ be a directed graph and $s,t \in V$.
	Define $\calT' \coloneqq \{A_{v_1} \sqsubseteq A_{v_2} \mid (v_1, v_2) \in E\}$.
	Now there is an $s$-$t$-path in $G$ if and only if $\tup{\calT', \{A_s(a)\}} \models A_t(a)$.
	To obtain a \brave-abduction problem, we add an artifical inconsistency.
	Let $\calK \coloneqq \tup{\calT, \calA}$, where $\calT \coloneqq \calT' \cup \{B_1 \sqsubseteq \neg B_2\}$ and $\calA \coloneqq \{B_1(b), B_2(b)\}$.
	Obviously, $\calK \models \bot$ and $\calK \not\models_\brave A_t(a)$.
	Furthermore, defining $\Sigma \coloneqq \{A_s, a\}$ it is easy to see that
  \begin{align*}
    \text{there is an $s$-$t$ path in $G$} &\Longleftrightarrow A_s(a) \text{ is a \brave-hypothesis for } A_t(a) \text{ in } \calK \\
    &\Longleftrightarrow \text{there is a \brave-hypothesis for } A_t(a) \text{ over } \Sigma \text{ in } \calK.
  \end{align*}
    As $\calK$ and $A_t(a)$ can be constructed from $G$ in logarithmic space, this shows \NL-hardness under logspace many-one reductions.
\end{proof}
\end{longonly}

Regarding $\ar$-semantics for $\DLLite$, it is easy to see that $\SigmaP$-membership can be shown in the same way as for \ELbot in the proof of Theorem~\ref{thm:signature-complexity}.
Determining the precise complexity for this case remains open for now.


\section{Verification Problem}
\label{sec:VePro}


The verification problem does not become quite as easy even without restricting the signature, so even if trivial hypotheses are allowed.
Interestingly, we even show that in case of $\subseteq$-minimality the complexity goes beyond that of entailment under repair semantics in some cases.
We begin with the case of general and $\leq$-minimal hypotheses for \ELbot, where the complexity of the corresponding entailment problem is inherited.



\begin{lemma}\label{lem:verification-el}
	For $\ELbot$, the verification problem for $\calS$-hypotheses is 
	(1) $\NP$-complete for $\calS=\brave$, and
	(2) $\co\NP$-complete for $\calS=\ar$. 
	This 
	also applies to $\leq$-minimal hypotheses.
\end{lemma}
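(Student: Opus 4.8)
The plan is to exploit the fact that, once the signature is unrestricted, verifying a hypothesis is nothing more than an entailment check on an enlarged ABox, and that cardinality-minimality collapses to a trivial size test.

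\emph{Membership.} An ABox $\calH$ is an $\calS$-hypothesis for $\tup{\calK,\alpha}$ precisely when $\tup{\calT,\calA\cup\calH}\models_\calS\alpha$, so verification is literally an instance of $\calS$-entailment over $\calA\cup\calH$. Since entailment of atomic BIQs in $\ELbot$ lies in $\NP$ under $\brave$ and in $\co\NP$ under $\ar$, both membership claims follow at once. For $\leq$-minimal hypotheses I would additionally argue that the minimal hypothesis size is always $1$: the empty set is never a hypothesis (otherwise $\calK\models_\calS\alpha$, contradicting validity of the abduction problem), while $\{\alpha\}$ is always a $\brave$-hypothesis and, by \cref{lem:conf-pres}, an $\ar$-hypothesis whenever any $\ar$-hypothesis exists. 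Hence $\calH$ is $\leq$-minimal iff it is a hypothesis with $|\calH|=1$, and the minimality test adds only a trivial cardinality check, leaving the complexity unchanged.

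\emph{Hardness.} I would reduce the matching $\ELbot$ entailment problem ($\brave$-entailment for~(1), which is $\NP$-hard; $\ar$-entailment for~(2), which is $\co\NP$-hard) to verification. Two constraints must be met simultaneously: the produced instance must be a valid abduction problem (an inconsistent $\calK$ with $\calK\not\models_\calS\alpha$), and, to also cover the $\leq$-minimal case, the candidate $\calH$ should be a single assertion. Both are achieved by a \emph{gating} construction, computable in polynomial time. Given an entailment instance $\tup{\calT_0,\calA_0}$ with $\alpha_0=A_0(a_0)$, take fresh names $S,A^\ast,B_1,B_2$ and a fresh individual $c$, and set
\[
\calT=\calT_0\cup\{A_0\sqcap S\sqsubseteq A^\ast,\ B_1\sqcap B_2\sqsubseteq\bot\},\qquad
\calA=\calA_0\cup\{B_1(c),B_2(c)\},
\]
with observation $\alpha=A^\ast(a_0)$ and candidate $\calH=\{S(a_0)\}$.

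\emph{Correctness and the main obstacle.} Validity holds because $\{B_1(c),B_2(c)\}$ renders $\calK$ inconsistent, while $A^\ast$ is derivable only through $S$, which does not occur in $\calA$; hence no repair of $\calK$ entails $A^\ast(a_0)$ and $\calK\not\models_\calS\alpha$. As $S(a_0)$ conflicts with nothing, it belongs to every repair of $\tup{\calT,\calA\cup\{S(a_0)\}}$, and the fresh conflict contributes only an independent binary choice that cannot affect $A^\ast$; consequently a repair entails $A^\ast(a_0)$ iff its $\calA_0$-part entails $A_0(a_0)$, so $\tup{\calT,\calA\cup\{S(a_0)\}}\models_\calS A^\ast(a_0)$ iff $\tup{\calT_0,\calA_0}\models_\calS A_0(a_0)$. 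Because $|\calH|=1$ equals the minimal hypothesis size, $\calH$ is a hypothesis iff it is a $\leq$-minimal one, so this single reduction settles hardness for both the general and the $\leq$-minimal verification problem under both semantics. I expect the $\leq$-minimal case to be the main obstacle: the key insight is that minimality degenerates to a cardinality check via the size-$1$ hypothesis $\{\alpha\}$, and the reduction must carry the hard instance through one gated assertion $S(a_0)$ rather than through the whole ABox $\calA_0$; the rest is routine given the entailment complexities.
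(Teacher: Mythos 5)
Your proof is correct and follows essentially the same route as the paper's: membership by observing that verification is just $\calS$-entailment over $\calA\cup\calH$ (with $\leq$-minimality collapsing to a size-$1$ check via \cref{lem:conf-pres}), and hardness by a gating reduction from the matching entailment problem with a fresh gate concept as the single-assertion hypothesis. The only difference is that you explicitly add an artificial inconsistency $\{B_1(c),B_2(c)\}$ to guarantee the constructed instance is a valid abduction problem, a small robustness point the paper leaves implicit (its hard entailment instances are already inconsistent).
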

\begin{longonly}
\begin{proof}
  Let $\tup{\calK, \alpha}$ be an \calS-abduction problem for \ELbot and \Hyp an ABox.
	Observe that the question whether $\Hyp$ is a $\calS$-hypothesis for $\tup{\calK, \alpha}$ is in fact $\calS$-entailment.
	Thus, $\calS$-verification is at most as hard as $\calS$-entailment.
	Furthermore, to check $\leq$-minimality it is sufficient to check whether $|\Hyp| = 1$, since $\{\alpha\}$ is an \calS-hypothesis for $\alpha$ in \calK if and only if there is any such hypothesis by \cref{lem:conf-pres}. 
	Thus, it suffices to determine whether $\tup{\calT,\calA\cup\Hyp} \models_\calS \alpha$.

	For hardness, we reduce $\calS$-entailment to $\calS$-verification.
	Given a KB $\calK = \tup{\calT,\calA}$ and a Boolean instance query $C(a)$, we let $\calT' = \calT \cup \{C\sqcap B \subsum A\}$ and consider $\calK'=\tup{\calT',\calA}$.
	Moreover, we let $\alpha \dfn A(a)$ and $\Hyp\dfn\{B(a)\}$.
	It is easy to see that $\calK' \not\models_\calS \alpha$ and 
  \[\calK \models_\calS C(a)\iff \calK'\models_\calS C(a) \iff \tup{\calT',\calA\cup\calH}\models_\calS \alpha.\]
	Consequently, $\tup{\calK, \alpha}$ is an \calS-abduction problem and $\calH$ is a ($\leq$-minimal) $\calS$-hypothesis for $\tup{\calK',\alpha}$ if and only if $\calK \models_\calS C(a)$. 	
\end{proof}
\end{longonly}


We prove next that the complexity of verification rises to $\DP$-completeness for $\subseteq$-minimal hypotheses.
The complexity gap between verifying $\subseteq$ and $\leq$ hypotheses seems somewhat surprising at first.
Nevertheless, the \enquote{lower} complexity of verifying $\leq$-minimal hypothesis can be explained by observing that a $\leq$-minimal hypothesis has size one (namely, $\{\alpha\}$ itself).

\begin{theorem}\label{thm:verification-el}
	For $\ELbot$, verification for $\subseteq$-minimal $\brave$-hypotheses is $\DP$-complete,
  whereas verification for $\ar$-hypotheses is $\DP$-hard and in $\PiP$.
\end{theorem}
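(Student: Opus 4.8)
The plan is to treat membership and hardness separately, and to handle the two semantics in parallel since their decompositions mirror each other. For membership I would split verification of a $\subseteq$-minimal hypothesis into two independent checks: (i) that $\Hyp$ is a hypothesis, and (ii) that no proper subset of $\Hyp$ is one. For $\brave$, check (i) is exactly $\brave$-entailment of $\alpha$ from $\tup{\calT,\calA\cup\Hyp}$, which is in $\NP$; the \emph{negation} of (ii)---that some $\Hyp'\subsetneq\Hyp$ is a $\brave$-hypothesis---is also in $\NP$, since one can guess $\Hyp'$ together with a repair of $\tup{\calT,\calA\cup\Hyp'}$ entailing $\alpha$ and verify in polynomial time; hence (ii) is in $\co\NP$, and the conjunction of an $\NP$ and a $\co\NP$ property is by definition in $\DP$. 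For $\ar$, check (i) is $\ar$-entailment, which is in $\co\NP\subseteq\PiP$; check (ii) now reads $\forall\,\Hyp'\subsetneq\Hyp\colon\tup{\calT,\calA\cup\Hyp'}\not\models_\ar\alpha$, and since non-$\ar$-entailment is in $\NP$ (guess a repair witnessing non-entailment), (ii) is a $\forall\exists$-statement and thus in $\PiP$; the whole problem therefore lies in $\PiP$.

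For $\DP$-hardness I would reduce from the canonical $\DP$-complete problem that, given CNF formulas $\varphi_1,\varphi_2$, asks whether $\varphi_1$ is satisfiable and $\varphi_2$ is unsatisfiable. The common skeleton is a single KB on one individual $m$ carrying, for the variables of \emph{both} formulas, the assignment gadget $A_x\sqcap A_{\bar x}\sqsubseteq\bot$ (so that repairs correspond to truth assignments), and a candidate that is a two-element ABox $\Hyp=\{B(m),B'(m)\}$ over two fresh concept names. Two ``routes'' to the observation $\alpha=G(m)$ are installed: one gated by $B$ alone and one gated by $B\sqcap B'$. By construction $\{B'(m)\}$ and $\emptyset$ never derive $G$ (both routes require $B$), so minimality of $\Hyp$ collapses to whether the single proper subset $\{B(m)\}$ is a hypothesis.

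For $\brave$ I would route via \emph{satisfaction}, using clause axioms $A_\ell\sqsubseteq A_c$ and $\sqcap_{c}A_c\sqsubseteq A_\varphi$ as in \cref{thm:consistent-signature-elbot}, so that $G$ is derived in a repair precisely when its assignment satisfies the relevant formula; gating the $\varphi_2$-route by $B$ and the $\varphi_1$-route by $B\sqcap B'$ then gives that $\{B(m)\}$ is a $\brave$-hypothesis iff $\varphi_2$ is satisfiable, while $\Hyp$ is one iff $\varphi_1$ or $\varphi_2$ is satisfiable. Hence $\Hyp$ is a $\subseteq$-minimal $\brave$-hypothesis iff $\varphi_1$ is satisfiable and $\varphi_2$ is unsatisfiable, which together with the membership gives $\DP$-completeness. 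For $\ar$ I would \emph{dualise} by routing via \emph{violation}: for each clause $c$ an axiom $\sqcap_{\ell\in c}A_{\bar\ell}\sqsubseteq V_c$ followed by a gated $V_c\sqcap(\cdots)\sqsubseteq G$ makes $G$ hold in a repair exactly when its assignment falsifies the formula, so that $\ar$-entailment of $G$ corresponds to \emph{unsatisfiability}; gating the $\varphi_1$-route by $B$ and the $\varphi_2$-route by $B\sqcap B'$ yields $\{B(m)\}$ an $\ar$-hypothesis iff $\varphi_1$ is unsatisfiable and $\Hyp$ one iff $\varphi_1$ or $\varphi_2$ is unsatisfiable, so that $\Hyp$ is $\subseteq$-minimal iff again $\varphi_1$ is satisfiable and $\varphi_2$ is unsatisfiable, giving $\DP$-hardness.

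The main obstacle I anticipate is verifying that the two routes do not interfere on the combined KB: I must check that the fresh gate concepts $B,B'$ introduce no new conflicts, so that the repairs are exactly the assignment pairs augmented with whatever gate assertions are present, and that $\calK=\tup{\calT,\calA}$ is genuinely inconsistent with $\calK\not\models_\calS\alpha$. The one genuinely non-routine point is that $\Hyp$ behaves \emph{disjunctively} (it is a hypothesis iff \emph{either} formula is good), which at first looks off-target; the resolution is that the $\varphi_2$-disjunct (resp.\ the second UNSAT-disjunct for $\ar$) is always annihilated by the minimality requirement forcing $\{B(m)\}$ not to be a hypothesis, so the Boolean bookkeeping collapses to the intended ``$\varphi_1$ satisfiable and $\varphi_2$ unsatisfiable''. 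For $\ar$ the analogous care lies in arguing, via subset-maximality of repairs, that repairs yield \emph{full} assignments and that the gate assertions sit in every repair---this non-monotonic bookkeeping, rather than any deep combinatorics, is where the effort goes.
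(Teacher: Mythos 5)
Your membership argument is exactly the paper's: decompose into ``$\Hyp$ is a hypothesis'' (in $\NP$ for $\brave$, in $\coNP$ for $\ar$) and ``no proper subset is'' (in $\coNP$ for $\brave$, in $\PiP$ for $\ar$), giving $\DP$ resp.\ $\PiP$ for the conjunction. Your hardness proof, however, takes a genuinely different route. The paper reduces from the $\DP$-complete combined problem ``$\calK \models_\calS \alpha_1$ and $\calK \not\models_\calS \alpha_2$'' using a very light gadget: it adds only $C \subsum A$ and $A \sqcap B \sqcap D \subsum Q$ to the TBox, keeps the ABox unchanged, and verifies $\Hyp = \{A(a), B(a)\}$ against $\alpha = Q(a)$; being a hypothesis then encodes entailment of $D(a)$, minimality (failure of $\{B(a)\}$) encodes non-entailment of $C(a)$, and the identical gadget serves both semantics verbatim. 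You instead reduce directly from SAT--UNSAT, inlining assignment gadgets $A_x \sqcap A_{\bar x} \subsum \bot$ so that repairs are truth assignments, and using the two-gate trick ($B$ versus $B \sqcap B'$) so that $\{B(m)\}$ tests one formula while $\Hyp$ tests the disjunction; the disjunctive bookkeeping then correctly collapses to ``$\varphi_1$ satisfiable and $\varphi_2$ unsatisfiable''. This costs you two separate constructions (satisfaction routes for $\brave$, violation routes for $\ar$) where the paper needs one, but it buys self-containedness---you need not presuppose $\DP$-completeness of the combined entailment/non-entailment problem---and your repairs-as-assignments picture makes the independence of the two tests explicit, a point that is actually delicate in the paper's reduction (there, ``$\{B(a)\}$ is a $\brave$-hypothesis'' requires $C(a)$ and $D(a)$ to hold in the \emph{same} repair, not merely each in some repair).

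One caveat you must add to make your $\ar$ construction sound: if $\varphi_1$ and $\varphi_2$ share variables, then ``every repair entails $G(m)$'' means every joint assignment falsifies $\varphi_1$ or falsifies $\varphi_2$, which is unsatisfiability of $\varphi_1 \wedge \varphi_2$, \emph{not} ``$\varphi_1$ unsatisfiable or $\varphi_2$ unsatisfiable''. Concretely, for $\varphi_1 = \{x\}$ and $\varphi_2 = \{\neg x\}$ both formulas are satisfiable, yet your reduction would accept $\Hyp$ as a $\subseteq$-minimal $\ar$-hypothesis. The fix is to require the variable sets of the two input formulas to be disjoint, which is harmless since SAT--UNSAT remains $\DP$-complete after renaming the formulas apart, but without stating this the $\ar$ half of your reduction is incorrect. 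Your $\brave$ construction is immune to this issue, since brave entailment only needs a single repair witnessing one of the formulas.
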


\begin{proof}
	For membership, observe that $\Hyp$ is a $\subseteq$-minimal $\calS$-hypothesis if and only if (1) $\tup{\calT, \calA\cup \Hyp} \models_\calS \alpha$ and (2) for all subsets $\Hyp' \subsetneq \Hyp$, we have $\tup{\calT, \calA \cup \Hyp} \not\models_\calS \alpha$.
  In the case of $\brave$-hypotheses, (1) is instance checking for \ELbot and hence in \NP, while (2) can be checked in \coNP by universally guessing a subset $\calH'$ and repair $\calR$ of $\tup{\calT, \calA \cup \Hyp'}$ and checking that $\tup{\calT, \calR} \not\models \alpha$ in polynomial time.
	Hence, the problem is contained in \DP.
  Analogous reasoning under \ar semantics yields that (1) can be checked in $\co\NP$, whereas checking (2) requires an oracle to decide non-entailment under $\ar$ semantics for each $\calH'\subseteq \calH$.
  This shows $\co\NP^{\NP}$-membership.
	
	For hardness, we reduce from a combination of instance checking and its complement problem to our verification of hypotheses.
	Given an instance $\tup{\calK, \alpha_1, \alpha_2}$, the problem asks whether $\calK\models_\calS \alpha_1$ and $\calK\not\models_\calS \alpha_2$, where $\calS\in\{\brave,\ar\}$.
	This problem is $\DP$-complete because the first question is $\NP$-complete and the second question is $\co\NP$-complete under $\brave$ semantics and vice versa under $\ar$ semantics.
	For the reduction, assume $\alpha_1 = D(a)$, $\alpha_2= C(a)$, and $\calK = \tup{\calT,\calA}$.
  We construct a KB $\calK'$, an observation $\alpha$, and a hypothesis $\Hyp$ as illustrated next.
	Let $\calK' \coloneqq \tup{\calT',\calA}$ with $\calT' = \calT \cup \{C \subsum A, A\sqcap B\sqcap D \subsum Q\}$, $\alpha \coloneqq Q(a)$, and $\Hyp \coloneqq \{A(a), B(a)\}$ for fresh concepts $A,B,Q$.
	The instance is a valid abduction problem, since $\calK' \not\models_\calS \alpha$ (in particular, due to $B(a)$).
	Intuitively, $\Hyp$ is a Brave-hypothesis for $\alpha$ in $\calK'$ if and only if $\calK \models_\brave D(a)$ and $\Hyp$ is subset-minimal if and only if $\calK \not\models_\brave C(a)$.
\begin{shortonly}
  It remains to show correctness, i.e., $\Hyp$ is a $\subseteq$-minimal hypothesis for $\alpha$ in $\calK'$ if and only if $\calK \models_\brave \alpha_1$ and $\calK \not\models_\brave \alpha_2$.
\end{shortonly}
\begin{longonly}
	We next prove the correctness of reduction.
	\begin{claim}
		$\Hyp$ is a $\subseteq$-minimal hypothesis for $\alpha$ in $\calK'$ if and only if $\calK \models_\brave \alpha_1$ and $\calK \not\models_\brave \alpha_2$.
	\end{claim}
	\begin{claimproof}
  ``$\Longrightarrow$'': 
	Suppose $\Hyp$ is a $\subseteq$-minimal hypothesis for $Q(a)$ in $\calK'$.
	Observe that the only way to obtain the entailment $\calK' \models_\brave Q(a)$ is via the TBox axiom $A\sqcap B\sqcap D\subsum Q$, since no axiom in $\calK$ contains $Q$.
	Therefore, $\calK \models_\brave D(a)$ since otherwise, $\calK' \not\models_\brave D(a)$ and hence $\calK' \not\models_\brave Q(a)$.
	Moreover, we have $\calK \not\models_\brave C(a)$:
	Suppose to the contrary that $\calK \models_\brave C(a)$.
	Then $\calK' \models A(a)$ due to the axiom $C \sqsubseteq A$.
	Consequently, $\{B(a)\}$ is a \brave-hypothesis for $\alpha$ in $\calK'$, which is a contradiction to $\subseteq$-minimality of $\Hyp$.
	
	``$\Longleftarrow$'': Suppose $\calK \models D(a)$ and $\calK \not\models C(a)$.
	Then, $\tup{\calT', \calA\cup\Hyp} \models_\brave Q(a)$ since $\calK' \models D(a)$.
	Therefore $\Hyp$ is indeed a $\brave$-hypothesis for $\alpha$ in $\calK'$.
	For $\subseteq$-minimality, suppose to the contrary that there is a \brave-hypothesis $\Hyp' \subsetneq \Hyp$ for $\alpha$ in $\calK'$.
	Notice that $\Hyp'=\{B(a)\}$ since $B(a)$ can not be entailed from any other axiom in $\calK'$.
	However, this implies that $\calK' \models_\brave A(a)$, which can only be true if $\calK' \models C(a)$.
	As a result, we deduce that $\calK \models C(a)$, which is again a contradiction.
  \end{claimproof} 
\end{longonly}

	We conclude by observing that the above correctness proof works if we replace every $\brave$-entailment by $\ar$-entailment.
\end{proof}

We now turn to the case of \DLLite.
We begin by an observation on $\subseteq$-minimal (and $\leq$-minimal) \brave-hypotheses, namely that they always have cardinality $1$.

\begin{lemma}\label{lem:brave-hyp-size}
	For $\DLLite$, if $\calH$ is a $\subseteq$-minimal or $\leq$-minimal \brave-hypothesis for some \brave-abduction problem $\tup{\calK, \alpha}$, then $|\calH|=1$.
\end{lemma}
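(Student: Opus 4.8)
The plan is to exploit the two structural facts the paper isolates for \DLLite: that a minimal \calT-support of an atomic BIQ has size $1$ (fact (3)), and the definition of \brave-entailment via a single witnessing repair. First I would unfold the hypothesis condition: since \calH is a \brave-hypothesis, $\tup{\calT, \calA \cup \calH} \models_\brave \alpha$, so there is a repair \calR of $\tup{\calT, \calA \cup \calH}$ with $\tup{\calT, \calR} \models \alpha$. Because minimal \calT-supports of $\alpha$ have size $1$, I can pick a single assertion $\beta \in \calR$ with $\tup{\calT, \{\beta\}} \models \alpha$. As $\beta$ lies in a repair, $\{\beta\}$ is in particular \calT-consistent.

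The next step is to locate $\beta$. I would argue $\beta \notin \calA$: otherwise, extending $\{\beta\}$ to a maximal \calT-consistent subset of \calA gives a repair of \calK entailing $\alpha$ (by monotonicity of classical entailment), so $\calK \models_\brave \alpha$, contradicting that $\tup{\calK, \alpha}$ is a \brave-abduction problem. Hence $\beta \in \calH$. I then claim $\{\beta\}$ is itself a \brave-hypothesis for $\tup{\calK, \alpha}$: since $\{\beta\}$ is \calT-consistent, extend it to a maximal \calT-consistent subset \calR' of $\calA \cup \{\beta\}$; then $\beta \in \calR'$ forces $\tup{\calT, \calR'} \models \alpha$, so \calR' witnesses $\tup{\calT, \calA \cup \{\beta\}} \models_\brave \alpha$.

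Finally I would conclude for both minimality notions. Since $\{\beta\} \subseteq \calH$ is a \brave-hypothesis and \calH is $\subseteq$-minimal, no proper subset of \calH can be a hypothesis, forcing $\calH = \{\beta\}$ and hence $|\calH| = 1$. For $\leq$-minimality, the hypothesis $\{\beta\}$ of size $1$ rules out $|\calH| > 1$, while $\calH = \emptyset$ is impossible as it would give $\calK \models_\brave \alpha$; hence again $|\calH| = 1$. I expect the main obstacle to be the verification that $\{\beta\}$ genuinely yields a repair of $\tup{\calT, \calA \cup \{\beta\}}$ containing $\beta$ --- that is, that a \calT-consistent singleton can always be extended to a maximal \calT-consistent subset retaining $\beta$ --- together with cleanly discharging the case $\beta \in \calA$, which is precisely where the promise-problem restriction $\calK \not\models_\brave \alpha$ is needed.
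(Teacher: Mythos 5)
Your proof is correct and takes essentially the same route as the paper: extract from a witnessing repair a size-one $\calT$-support $\{\beta\}$ (the paper's fact~(3) for \DLLite), argue that $\{\beta\}$ is itself a \brave-hypothesis for $\tup{\calK,\alpha}$, and conclude by minimality. If anything, you are slightly more careful than the paper's own write-up, which asserts $\beta \in \calH$ without explicitly discharging the case $\beta \in \calA$ via the promise $\calK \not\models_\brave \alpha$, and which settles the $\leq$-minimal case separately using the trivial hypothesis $\{\alpha\}$ rather than your unified argument.
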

\begin{longonly}
\begin{proof}
  If \Hyp is $\leq$-minimal, this is obvious, as $\{\alpha\}$ is a \brave-hypothesis for $\tup{\calK, \alpha}$ (see \cref{sec:triv}).
  Now assume that \Hyp is $\subseteq$-minimal and let $\calK = \tup{\calT, \calA}$.
	This means that we have $\tup{\calT, \calA \cup \Hyp} \models_\brave \alpha$.
	Hence, there is a repair $\calR \subseteq \calA \cup \Hyp$ such that $\tup{\calT, \calR} \models \alpha$.
	In other words, \calR is a \calT-support of $\alpha$, that is, an ABox consistent with \calT that entails $\alpha$ in \calT.
  As subset-minimal supports with respect to \DLLite TBoxes are always of size $1$, there is an assertion $\beta \in \calR$ such that $\tup{\calT, \{\beta\}} \models \alpha$.
	This implies that also $\tup{\calT, \calA \cup \{\beta\}} \models_\brave \alpha$, so $\{\beta\}$ is a \brave-hypothesis for $\alpha$ in \calK.
	As \Hyp is $\subseteq$-minimal and $\beta \in \Hyp$, this implies that $\Hyp = \{\beta\}$.
\end{proof}
\end{longonly}

The next theorem establishes the complexity of the verification problem for \brave-hypotheses in \DLLite, in the general, $\leq$-minimal and $\subseteq$-minimal case.

\begin{theorem}\label{thm:verification-dllite-brave}
	For $\DLLite$, the verification problem for general, $\leq$-minimal and $\subseteq$-minimal \brave-hypotheses is \NL-complete.
\end{theorem}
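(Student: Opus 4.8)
The plan is to handle membership and hardness separately, reducing all three minimality variants to plain \brave-entailment. For \emph{membership}, I would first note that verifying a general \brave-hypothesis $\Hyp$ for $\tup{\calK, \alpha}$ is exactly deciding $\tup{\calT, \calA \cup \Hyp} \models_\brave \alpha$, which is in \NL for \DLLite. For the two minimal variants, the key observation is that \cref{lem:brave-hyp-size} has an easy converse: any \brave-hypothesis of size $1$ is automatically both $\subseteq$-minimal and $\leq$-minimal, since its only proper subset is $\emptyset$, and $\emptyset$ is never a \brave-hypothesis (an abduction problem requires $\calK \not\models_\brave \alpha$). Hence $\Hyp$ is a $\subseteq$-minimal (equivalently, $\leq$-minimal) \brave-hypothesis if and only if $\Hyp$ is a \brave-hypothesis and $|\Hyp| = 1$. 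Both conditions are checkable in \NL (the cardinality test even in logarithmic space), so all three variants lie in \NL.

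For \emph{hardness}, I would reduce from $s$-$t$-reachability in directed graphs, reusing the gadget from the proof of \cref{thm:signature-complexity-dllite}. Given $G = (V,E)$ and $s,t \in V$, set $\calT' \coloneqq \{A_{v_1} \sqsubseteq A_{v_2} \mid (v_1,v_2) \in E\}$, so that an $s$-$t$-path exists iff $\tup{\calT', \{A_s(a)\}} \models A_t(a)$. To turn this into a valid \brave-abduction problem I would add an artificial inconsistency over fresh names disjoint from the $A_v$: let $\calT \coloneqq \calT' \cup \{B_1 \sqsubseteq \neg B_2\}$ and $\calA \coloneqq \{B_1(b), B_2(b)\}$, so $\calK = \tup{\calT, \calA}$ is inconsistent and $\calK \not\models_\brave A_t(a)$. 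Taking $\alpha \coloneqq A_t(a)$ and $\Hyp \coloneqq \{A_s(a)\}$, the assertion $A_s(a)$ lies outside every conflict and hence occurs in every repair of $\tup{\calT, \calA \cup \Hyp}$, so $\Hyp$ is a \brave-hypothesis iff some repair entails $A_t(a)$ iff there is an $s$-$t$-path.

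Finally, because $\Hyp = \{A_s(a)\}$ has size $1$, the converse to \cref{lem:brave-hyp-size} noted above makes it a $\subseteq$-minimal and $\leq$-minimal \brave-hypothesis exactly when it is a \brave-hypothesis at all; thus the single reduction establishes \NL-hardness simultaneously for the general, $\leq$-minimal, and $\subseteq$-minimal cases. All constructions are computable in logarithmic space, so hardness holds under logspace many-one reductions, and combined with membership this yields \NL-completeness throughout. The one point requiring care is this \emph{converse} of \cref{lem:brave-hyp-size}: the collapse of the three variants to a cardinality test hinges on $\emptyset$ failing to be a hypothesis, which is precisely guaranteed by the promise $\calK \not\models_\brave \alpha$ built into the definition of an abduction problem. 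Getting this dependency explicit is the main (if modest) obstacle.
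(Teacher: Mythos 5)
Your proposal is correct and follows essentially the same route as the paper's proof: membership by reducing all three variants to a \brave-entailment check plus a cardinality-$1$ test via \cref{lem:brave-hyp-size}, and hardness by the same reachability reduction with the gadget from \cref{thm:signature-complexity-dllite}, applied to the singleton candidate $\{A_s(a)\}$. Your one addition---making explicit the converse of \cref{lem:brave-hyp-size}, namely that a size-$1$ hypothesis is automatically $\subseteq$- and $\leq$-minimal because $\emptyset$ cannot be a hypothesis under the promise $\calK \not\models_\brave \alpha$---is a point the paper uses implicitly, so your write-up is in fact slightly more careful on that step.
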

\begin{longonly}
\begin{proof}
	We first show membership.
	For general hypotheses, this can be shown 
	By \cref{lem:brave-hyp-size}, a given ABox is a $\subseteq$-minimal hypothesis if and only if it is a $\leq$-minimal hypothesis if and only if it is a hypothesis and has size $1$.
  Hence, we simply have to additionally check whether $|\Hyp| = 1$ in the algorithm for general hypotheses.

	
	Hardness for all three kinds of hypotheses can be shown by a straightforward reduction from reachability in directed graphs, using almost the same construction as for hardness in \cref{thm:signature-complexity-dllite}.
	For a directed graph $G = (V,E)$ $s,t \in V$, we can construct $\calK = \tup{\calT, \calA}$ exactly the same as in that proof in logarithmic space.
  Now instead of asking for existence of a hypothesis over signature $\{A_s, a\}$, we ask whether the ABox $\{A_s(a)\}$ is a general, $\leq$-minimal, or $\subseteq$-minimal \brave-hypothesis for $A_t(a)$.
  As before, we can observe that there is an $s$-$t$ path in $G$ if and only if $A_s(a)$ is a (general) \brave-hypothesis for $A_t(a)$ in \calK, and since $\{A_s(a)\}$ is a singleton set, the same is true with respect to $\leq$-minimal and $\subseteq$-minimal hypotheses.
\end{proof}
\end{longonly}

Finally, we turn towards the case of \ar semantics.

\begin{theorem}\label{thm:verification-dllite-ar}
	For \DLLite, the verification problem for \ar-hypotheses is 
	(1) $\co\NP$-complete for general and $\leq$-minimal hypotheses, and
	(2) $\DP$-hard for $\subseteq$-minimal ones with membership in $\PiP$.
\end{theorem}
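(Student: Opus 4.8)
The plan is to handle the two claims by mirroring the proof of Theorem~\ref{thm:verification-el} for \ELbot and re-engineering every step that exploited conjunctive concept inclusions, which \DLLite does not offer.

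For the membership parts I would argue exactly as in the \ELbot case, using that atomic BIQ entailment under \ar is \coNP-complete for \DLLite as well. Verifying that a given \Hyp is an \ar-hypothesis for $\tup{\calK,\alpha}$ is by definition the single test $\tup{\calT,\calA\cup\Hyp}\models_\ar\alpha$, hence in \coNP; for $\leq$-minimal hypotheses I additionally observe, via \cref{lem:conf-pres}, that whenever any \ar-hypothesis exists the singleton $\{\alpha\}$ is one, so the minimum cardinality is $1$ and a $\leq$-minimal hypothesis is precisely a hypothesis of size $1$. The extra check $|\Hyp|=1$ is in logspace, keeping the problem in \coNP, which proves the upper bound of~(1). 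For (2), \Hyp is a $\subseteq$-minimal \ar-hypothesis iff (i) $\tup{\calT,\calA\cup\Hyp}\models_\ar\alpha$ and (ii) no proper subset $\Hyp'\subsetneq\Hyp$ is an \ar-hypothesis. Part (i) is \coNP, and (ii) is a universal quantification over the (exponentially many) proper subsets of the \NP predicate $\tup{\calT,\calA\cup\Hyp'}\not\models_\ar\alpha$ (guess a repair witnessing failure of $\alpha$), hence in $\PiP$; the conjunction stays in $\PiP$, exactly as for \ELbot.

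For the hardness of (1) I would reduce from \ar-entailment of an atomic BIQ, which is \coNP-complete for \DLLite. Since verification of a general hypothesis is literally \ar-entailment of $\tup{\calT,\calA\cup\Hyp}$, the whole difficulty is to frame a hard \ar-entailment instance as a \emph{valid} abduction problem, i.e.\ with $\tup{\calT,\calA}\not\models_\ar\alpha$, while the hypothesis toggles the hard condition. The \coNP-hardness of atomic \ar-entailment for \DLLite is naturally phrased over disjointness axioms $B\sqsubseteq\neg B'$, each a size-$2$ conflict as all \DLLite conflicts are: the ABox atoms form a conflict graph, repairs are its maximal independent sets, and $\tup{\calT,\calA}\models_\ar A(a)$ iff every maximal independent set hits the set $T$ of atoms triggering $A$ via inclusions $B_v\sqsubseteq A$. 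For the \emph{general} case I put the entire choice structure into the hypothesis: let $\calA$ consist only of an auxiliary disjoint pair (to meet the inconsistency requirement), so $\tup{\calT,\calA}\not\models_\ar\alpha$ trivially, and let \Hyp be the graph atoms; then \Hyp is an \ar-hypothesis iff every maximal independent set hits $T$. For the $\leq$-minimal case the hypothesis must be a single atom, so I instead keep the graph in the base ABox, guarantee a $T$-avoiding maximal independent set by a fixed gadget (making the base \emph{unconditionally} non-entailing), and use one added atom $\beta=B_s(a)$ whose conflict-neighbourhood encodes the constraint, so that $\{\beta\}$ is an \ar-hypothesis iff every $T$-avoiding maximal independent set is consistent with $\beta$, a \coNP-complete property decoupled from the promise.

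For the hardness of (2) I would mirror the \ELbot reduction of Theorem~\ref{thm:verification-el}, reducing from the \DP-complete combination that asks, for $\tup{\calK,\alpha_1,\alpha_2}$, whether $\calK\models_\ar\alpha_1$ and $\calK\not\models_\ar\alpha_2$, using a two-element hypothesis $\Hyp=\{\beta_1,\beta_2\}$ in which being an \ar-hypothesis captures the \coNP condition $\calK\models_\ar\alpha_1$ while $\subseteq$-minimality (the failure of a singleton subset to be a hypothesis) captures the \NP condition $\calK\not\models_\ar\alpha_2$. The main obstacle throughout is that the \ELbot combinations were realised by conjunctive inclusions such as $A\sqcap B\sqcap D\sqsubseteq Q$, which \DLLite cannot express; every such ``and'' must be re-created through disjointness axioms and the induced repair/maximal-independent-set structure, and the delicate point is to arrange this so that the base remains a non-entailing (hence valid) abduction problem while the one or two hypothesis atoms toggle exactly the intended \coNP (respectively \NP) condition.
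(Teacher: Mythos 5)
Your membership arguments for both (1) and (2) coincide with the paper's, and your hardness plan for (1) is essentially the paper's approach in spirit: the paper instantiates exactly your scheme by reusing the known UNSAT-to-\ar-entailment reduction for \DLLite of \cite{bienvenu2019computing} (roles $P,N,U$ encoding clauses), taking $\Hyp=\{U(a,c_j)\mid j\leq k\}$ for general hypotheses, and, for the $\leq$-minimal case, padding $\varphi$ with a fresh variable $x_{n+1}$ added to every clause together with the clauses $\neg x_{n+1}\lor x_{n+2}$ and $\neg x_{n+2}$, so that the base KB is unconditionally non-entailing (hence a valid \ar-abduction problem) and the single assertion $U(a,c_{k+2})$ toggles unsatisfiability. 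Your "fixed gadget plus one toggling atom" description is the same idea, just not worked out concretely.

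The genuine gap is in the $\DP$-hardness of (2). You correctly identify the obstacle --- \DLLite cannot express the conjunctive inclusion $A\sqcap B\sqcap D\subsum Q$ on which the \ELbot reduction of \cref{thm:verification-el} rests --- but you only promise to "re-create the and through disjointness axioms and the repair structure" without a construction, and the natural reading of your plan provably cannot work. If $\beta_1$ and $\beta_2$ are placed in conflict-independent gadgets, then (because minimal \calT-supports of an atomic BIQ in \DLLite are singletons and repairs of a union of non-interacting components are unions of componentwise repairs) \ar-entailment of $\alpha$ over the union is the \emph{disjunction} of the componentwise \ar-entailments; hence whenever $\{\beta_1,\beta_2\}$ is an \ar-hypothesis, one of $\{\beta_1\}$ or $\{\beta_2\}$ already is, so a two-element hypothesis split across independent gadgets is never $\subseteq$-minimal. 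The required interaction between the two added assertions is exactly the difficulty, and the paper resolves it by a different route: within the UNSAT reduction, the $\subseteq$-minimal \ar-hypotheses of the form $\{U(a,c_j)\mid c_j\in\psi\}$ correspond precisely to the minimal unsatisfiable subsets (MUSes) $\psi$ of $\varphi$, and verifying that a clause set is a MUS is $\DP$-hard \cite{Liberatore05}. There, the conjunctive behaviour of the hypothesis comes for free from \emph{joint} unsatisfiability of the clauses rather than from concept conjunction; some idea of this kind (not present in your proposal) is what the hardness proof actually needs.
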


\begin{proof}
	\emph{General hypotheses:}
	Regarding membership, observe that the question can be answered by checking whether $\tup{\calT,\calA\cup\calH}\models_\ar \alpha$.
	Hence, the complexity follows from that of $\ar$-entailment for $\DLLite$.
	For hardness, we reuse the following reduction from unsatisfiability and AR-entailment~\cite{bienvenu2019computing}.
	Let $\varphi = \{c_1, \dots, c_k\}$ over propositions $X=\{x_1,\dots, x_n\}$, where the $c_i$ are clauses.
  We construct $\calK = \tup{\calT, \calA}$ using a single concept name $A$ and role names $N = \{P, N, U\} $, where
	\begin{align*}
	\calT &= \{\exists P^{-} \subsum \neg \exists N^{-}, \exists P \subsum \neg \exists U^{-}, \exists N \subsum \neg \exists U^{-}, \exists U \subsum A \}, \text{ and} \\
	\calA &= \{P(c_j,x_i) \mid x_i \in c_j \} \cup \{N(c_j,x_i) \mid \neg x_i \in c_j \} \cup \{U(a,c_j) \mid j \leq k \}.
\end{align*}
	Moreover, let $\alpha \dfn A(a)$.
	It is known that $\calK \models_\ar A(a)$ if and only if $\varphi$ is unsatisfiable~\cite{bienvenu2019computing}.
	To show hardness of the verification problem at hand, let $\Hyp \dfn \{U(a,c_j) \mid j \leq k\}$ and $\calK' \dfn \tup{\calT, \calA \setminus \Hyp}$.
	Clearly, $\Hyp$ is an $\ar$-hypothesis for $\alpha$ in $\calK'$ if and only if $\tup{\calT, \calA \cup \Hyp} \models_\ar \alpha$ if and only if $\varphi$ is unsatisfiable.

	\emph{Cardinality-minimal hypotheses:}
  For membership, recall that for a given \ar-abduction problem $\tup{\calK, \alpha}$, the singleton set $\{\alpha\}$ is a solution if and only if there is any solution by \cref{lem:conf-pres}. 
  Hence, we can use the algorithm for general hypotheses and additionally check that $|\Hyp| = 1$, yielding \coNP-membership.

  For hardness, we again adapt the reduction from unsatisfiability to \ar-entailment. 
  In particular, we modify the given CNF-formula before applying the reduction to ensure that a specific singleton ABox is an \ar-hypothesis if and only if the CNF-formula is unsatisfiable.
  Let $\varphi = \{c_1, \dots, c_k\}$ over variables $X = \{x_1, \dots, x_n\}$.
  Define
  \begin{align*}
    c_i' &\coloneqq c_i \cup \{x_{n+1}\} \text{ for } 1 \leq i \leq k, \\
    c_{k+1}' &\coloneqq \neg x_{n+1} \lor x_{n+2}, \text{ and} \\
    c_{k+2}' &\coloneqq \neg x_{n+2}
  \end{align*}
  and let $\varphi_1 \coloneqq \{c_1', \dots, c_{k+1}'\}$ and $\varphi_2 \coloneqq \varphi_1 \cup \{c_{k+2}'\}$.
  Analogously to the construction of \calK from $\varphi$ in the hardness proof for general hypotheses above, we construct knowledge bases $\calK_i = \tup{\calT,\calA_i}$ from $\varphi_i$ for $i \in \{1,2\}$.
  Further, define $\calK_2' \coloneqq \tup{\calT, \calA_2 \setminus \{U(a, c_{k+2})\}}$.
\begin{shortonly}
  In order to show \coNP-hardness, it remains to show that $\tup{\calK_2', A(a)}$ is a valid \ar-abduction problem and $\Hyp = \{U(a, c_{k+2})\}$ is a ($\leq$-minimal) solution to it if and only if $\varphi$ is unsatisfiable.
\end{shortonly}
\begin{longonly}
  The following claim now establishes $\coNP$-hardness. 
  \begin{claim}
    $\tup{\calK_2', A(a)}$ is a valid \ar-abduction problem and $\Hyp = \{U(a, c_{k+2})\}$ is a ($\leq$-minimal) solution to it if and only if $\varphi$ is unsatisfiable.
  \end{claim}
  \begin{claimproof}
  Obviously, every satisfying asignment of $\varphi_2$ assigns $x_{n+2}$ and $x_{n+1}$ to $0$, and hence $\varphi_2$ and $\varphi$ are equisatisfiable.
  The formula $\varphi_1$, however, is always satisfiable, as we can simply assign both $x_{n+1}$ and $x_{n+2}$ to $1$.
  Now note that $\varphi$ is unsatisfiable if and only if $\calK_2 \models_\ar A(a)$, because of correctness of the original reduction and equisatisfiability of $\varphi$ and $\varphi_2$. 
  Next observe that entailment of $A(a)$ in $\calK_2'$ only depends on the assertions corresponding to clauses $c_1, \dots, c_{k+1}$, as $U(a, c_{k+2})$ is not present.
  Hence, despite keeping the assertion $N(c_{k+2}, x_{n+2})$, we have that $\calK_1 \models_\ar A(a)$ if and only if $\calK_2' \models_\ar A(a)$.
  Because of correctness of the original reduction and the fact that $\varphi_1$ is satisfiable, this means that $\calK_2' \not\models_\ar A(a)$.
  Combining these two observations, it follows that $\tup{\calK_2', A(a)}$ is a valid \ar-abduction problem, and $\calH \coloneqq \{U(a, c_{k+2})\}$ is a solution to it if and only if $\varphi$ is unsatisfiable.
  Furthermore, if $\calH$ is a solution, then it is also $\leq$-minimal as $|\calH| = 1$.
  \end{claimproof}
\end{longonly}
	
	\emph{Subset-minimal hypotheses:}
	We can prove \PiP-membership similar to the case of \ELbot in Theorem~\ref{thm:verification-el}.
	For $\DP$-hardness, we reuse the above reduction but first 
	introduce some terminology.
	Given a formula $\varphi$ in CNF, a collection $\psi\subseteq \varphi$ of clauses is a \emph{minimal unsatisfiable subset} (MUS) of $\varphi$ if $\psi$ is unsatisfiable but $\psi'$ is satisfiable for every $\psi' \subset\psi$.
	It can be observed that the subset-minimal \ar-hypotheses $\calH$ for $\alpha$ in $\calK'$ correspond precisely to MUSes $\psi_\Hyp$ for $\varphi$ by taking $c_j\in \psi_\calH\iff U(a,c_j)\in \calH$.
	Then, the claim follows by observing that the problem to decide if a set of clauses is a MUS is $\DP$-hard~\cite{Liberatore05}.
	For hardness, we reuse the reduction from above and encode a given set $\psi$ into the hypothesis as $\Hyp_\psi= \{U(a,c_j)\mid c_j\in\psi\}$.	
\end{proof}


\section{Conclusion and Future Work} 

\textbf{Summary.} In this paper, we provided an initial study on ABox abduction under repair semantics building on the work from \cite{du2015towards}.
Our main contributions include new minimality criteria for preferred hypotheses w.r.t.\  inconsistent KBs and initial complexity results for the existence and the verification problem for flat ABox abduction and atomic BIQs as observations. 
Our results on combined complexity show that with an unrestricted signature, the complexity can be lower than for the entailment under repair semantics, while signature restrictions can make the problems computationally harder.
Verification stays as hard as deciding classical entailment, but the choice of minimality criteria can increase the complexity (e.g., $\subseteq$-minimality).

\medskip
\noindent
\textbf{Future Work.} For our initial setting considered, we have a complete picture of the complexity regarding \brave semantics, whereas the complexity analysis for  \ar semantics  has some gaps.
It seems that  these gaps can  be explained by the non-convex behavior 
of $\ar$-hypotheses that was illustrated in Section~\ref{sec:abd}. We plan to explore these effects further and complete the complexity landscape for the considered problems and more expressive formulas as observations.
Moreover, the complexity when considering conflict-confining hypotheses also remains open for certain cases, even for \brave-semantics.
Having established a complete picture regarding the combined complexity, we also intend to see the effect of a fixed TBox by considering the {data} complexity. 
One can observe that several results from the current paper already transfer to the data complexity since the employed reductions result in a fixed TBox.

There are many directions for future work regarding extensions of the fairly limited initial setting studied here. 
One particularly interesting direction is to explore the related problems from the literature on abduction, such as \emph{necessity} and \emph{relevance} of axioms in hypotheses, which have been treated to a certain extent in~\cite{bienvenu2019computing}.
Moreover, the abduction problem with size restrictions has been considered before in propositional logic~\cite{MahmoodMS21,Mahmood22}.
In our setting, it seems interesting to impose size restrictions for a hypothesis but also for the corresponding set of conflicts.
Additionally, the signature-based settings considered previously only restrict concepts and roles~\cite{Koopmann21a}.
This has the effect that the hypotheses may get exponentially large already for $\ELbot$.
It is therefore worth exploring whether the inconsistency of KBs poses any additional challenges resulting in another blow-up.
We also aim to define a suitable and meaningful notion of semantically minimal hypothesis under repair semantics in future work.

\bibliography{../main}

\end{document}